\newcommand{\Alice}{Akira\xspace}
\newcommand{\Charlie}{Cobra\xspace}
\newcommand{\Ken}{Ken\xspace}
\newcommand{\Kakarotto}{Kakarotto\xspace}
\newcommand{\Tom}{Totoro\xspace}
\newcommand{\N}{\ensuremath{\mathbb N}}
\newcommand{\seck}{\mathfrak{K}}
\title{Physical Zero-Knowledge Proofs for Akari, Takuzu, Kakuro and
  KenKen \footnote{This work was partially supported by “Digital
    trust” Chair from the University of Auvergne Foundation, by the HPAC project of the French Agence Nationale de la Recherche (ANR~11~BS02~013), and by the CNRS PEPS JCJC VESPA. The authors would like to thank the anonymous reviewers for their helpful comments.}}
\author{Xavier Bultel\footnotemark[2]
\and Jannik Dreier \footnotemark[3]\,\,\footnotemark[4]\ \footnotemark[5]
\and Jean-Guillaume Dumas \footnotemark[6]
\and Pascal Lafourcade\footnotemark[2]}
\newtheorem{lemma}{Lemma}
\begin{document}

\maketitle
\footnotetext[2]{LIMOS, University Clermont Auvergne,
  Campus des C\'ezeaux, Aubi\`ere, France.\\
  \href{mailto:xavier.bultel\string
    @udamail.fr,pascal.lafourcade\string @udamail.fr}{firstname.lastname\string @udamail.fr}}
\footnotetext[3]{Universit\'e de Lorraine, Loria, UMR 7503, Vandoeuvre-l\`es-Nancy, France.}
\footnotetext[4]{Inria, Villers-l\`es-Nancy, France.}
\footnotetext[5]{CNRS, Loria, UMR 7503, Vandoeuvre-l\`es-Nancy, France.\\
  \href{mailto:jannik.dreier\string
    @loria.fr}{jannik.dreier\string @loria.fr}}
\footnotetext[6]{LJK, Universit\'e Grenoble Alpes, CNRS umr 5224,
  51, av. des Mathématiques, BP53, 38041 Grenoble, France.\\
  \href{mailto:Jean-Guillaume.Dumas\string
    @imag.fr}{Jean-Guillaume.Dumas\string @imag.fr}}

\begin{abstract}
Akari, Takuzu, Kakuro and KenKen are logic games similar to Sudoku. 
In Akari, a labyrinth on a grid has to be lit by placing lanterns,
respecting various constraints.  In Takuzu a grid has to be filled
with 0's and 1's, while respecting certain constraints.  In Kakuro a
grid has to be filled with numbers such that the sums per row and
column match given values; similarly in KenKen a grid has to be filled
with numbers such that in given areas the product, sum, difference or
quotient equals a given value.  We give physical algorithms to realize
zero-knowledge proofs for these games which allow a player to show
that he knows a solution without revealing it.  These interactive
proofs can be realized with simple office material as they only rely
on cards and envelopes.  Moreover, we formalize our algorithms and
prove their security.

 \end{abstract}

\section{Introduction}

\Alice and \Tom, close friends and both great fans of logical games, decide to challenge each other on their favorite games.
\Alice is an expert in Akari, and \Tom is a specialist of Takuzu.
Each one proposes a grid of his favorite game to the other one as a
challenge.  However, as both are extremely competitive, they choose
grids that are so difficult that the other is not able to solve them,
as they are less experienced in the other game.
\Tom, working in security, immediately supposes that something went wrong, and wants a proof from \Alice that his grid actually has a solution.
\Alice, feeling hurt by this distrust, directly asks \Tom the same question.
Obviously both of them do not want to reveal the solution, as this
would render the challenge pointless. So they decide to ask \Ken, a
common friend, for help.  However, \Ken, a grand master of KenKen, has
the same problem.  He and his best friend \Kakarotto exchanged
challenges for KenKen and Kakuro, and want to show each other that
they know the solution without revealing it.
But then \Kakarotto has the idea to ask \Charlie, a computer
scientist and cryptographer and common friend, for help.
\Charlie directly sees a solution: a zero knowledge proof of knowledge of the solution (ZKP).
A ZKP is a protocol that allows a prover $P$ to convince a verifier $V$ that he knows some solution $s$ to the instance $\mathcal{I}$ of a problem $\mathcal{P}$, without revealing any information about $s$.
Such a protocol satisfies three properties\footnote{Moreover, if $\mathcal{P}$ is NP-complete, then the ZKP should be polynomial.
Otherwise it might be easier to find a solution than proving that a solution is a correct solution, making the proof pointless.}:
\begin{compactdesc}
\item[Completeness:] If $P$ knows $s$, then he is able to convince $V$.
\item[Soundness:] If $P$ does not know $s$, then he is not able to convince $V$ except with some \emph{``small''} probability\footnote{More precisely, we want a negligible probability, \emph{i.e.}, the probability should be a function $f$ of a security parameter $\seck$ (for example the number of repetitions of the protocol) such that $f$ is negligible, that is for every polynomial $P$, there exists  $n_0 > 0$ such that  $\forall~ x > n_0,f(x)<1/P(x)$.}.
\item[Zero-knowledge:] $V$ learns \emph{nothing} about $s$ except $\mathcal{I}$, \emph{i.e.} there exists a probabilistic polynomial time algorithm $\texttt{Sim}(\mathcal{I})$ (called the simulator) such that outputs of the real protocol and outputs of $\texttt{Sim}(\mathcal{I})$ follow the same probability distribution.
\end{compactdesc}
Yet, ZKPs are usually executed by computers, which \Tom (a very skeptical security expert) does not trust.
However, \Charlie is still able to help them, by inventing a ZPK that
relies only on physical objects such as cards and envelopes.
In this paper, we present \Charlie's solution to \Alice's, \Tom's, \Ken's and \Kakarotto's dilemma.%

\subparagraph{Contributions:}
We construct physical ZKP for Akari, Kakuro, KenKen and Takuzu.
\begin{compactitem}
\item For Akari, our ZKP construction uses special cards and envelopes.
 Moreover, the prover needs to interact with the verifier to construct
 the proof.
\item For Takuzu, we propose an interactive construction using cards, paper and envelopes.
\item For Kakuro, we use red and black cards and envelopes to implement an addition operation.
\item For KenKen, we also rely on red and black cards and envelopes, but the interactive proof is more complex due to the different operations (sum, difference, product, quotient).
\end{compactitem}
We also prove the security of our constructions.%
\subparagraph{Related Work:}
Sudoku, introduced under this name in 1986 by the Japanese puzzle
company Nikoli, and similar games such as Akari, Takuzu, Kakuro and
Ken-Ken have gained immense popularity in recent years.  Following the
success of Sudoku, generalizations such as Mojidoku which uses letters
instead of digits, and other similar logic puzzles like Hitori, Masyu,
Futoshiki, Hashiwokakero, or Nurikabe were developed.  Many of them
have been proved to be
NP-complete~\cite{journals/icga/KendallPS08,DBLP:conf/mfcs/Demaine01}.

Interactive ZKPs were introduced by Goldwasser \emph{et al.}~\cite{Goldwasser:1985:KCI:22145.22178}, and it was then shown that for any NP-complete problem there exists an interactive ZKP protocol~\cite{GoldreichNP}.
An extension by Ben-Or \emph{et al.}~\cite{Ben-Or:1990:EPP:88314.88333} showed that every provable statement can be proved in zero-knowledge.
They also showed that physical protocols using envelopes exist, yet their construction is -- due to its generality -- rather involved an often impractical for real problem instances.
Proofs can also be non-interactive in the sense that the prover and verifier do not need to interact during the protocol~\cite{Blum:1988:NZA:62212.62222}.
For more background on ZKPs see for example~\cite{Menezes:1996:HAC:548089}.

As ZKPs have always been difficult to explain, there are works on how to explain the concepts to non experts, partly using physical protocols as illustrations. 
For example, in their famous paper~\cite{Quisquater:1989:EZP:646754.705056}, Quisquater \emph{et al.} propose ``Ali Bababa's cave'' as a tool to explain Zero-Knowledge Proof to kids.
In~\cite{Shasha:2014:UPP:2684442.2670917}, ZKP's are illustrated using a magician that can count the number of sand grains in a heap of sand.
Naor \emph{et al.} used the well-known ``Where's Waldo?'' cartoons to explain the concept of ZKP to kids, and also proposed an efficient physical protocol for the problem in~\cite{Naor99appliedkid}.

In 2007, the same authors proposed a ZKP for Sudoku using cards~\cite{Gradwohl:2007:CPZ:1760607.1760623}, which partly inspired \Charlie's solution in our paper. This was later extended for Hanjie~\cite{ChienH10}.

As in~\cite{Gradwohl:2007:CPZ:1760607.1760623}, we here also assume an abstract {\em shuffle functionality} 
which is essentially an indistinguishable shuffle of a set of sealed envelopes
or of face down cards.
This functionality is necessary to prevent information leakage, and cannot be realized neither by the verifier nor the prover.
The verifier cannot perform this action, as otherwise he could be perform a shuffle that is not random, and break zero-knowledge.
Moreover, in a physical protocol the prover cannot perform this action either, as he might be able to tamper with the packets at this step, similar to a magician performing a card trick.
A possible realization would be to rely on a trusted third party (\Charlie for instance), or trying to ensure that the prover cannot mess with the cards.

Moreover, there is work on implementing cryptographic protocols using
physical objects, for example in~\cite{DBLP:conf/fun/MizukiS14}, where
the authors use cards to perform multi-party computations, or
in~\cite{DBLP:conf/fun/DreierJL14} where a physical secure auction
protocol is proposed.%

\subparagraph*{Outline:}
For all games, we first present the rules followed by our ZKP construction and then the security analysis. In Section~\ref{sec:akari}, we present Akari, in Section~\ref{sec:takuzu}, 
Takuzu, in Section~\ref{sec:kakuro} Kakuro, and in
Section~\ref{sec:kenken} KenKen. In
Section~\ref{sec:conclusion}, we conclude the paper.

\section{Akari}\label{sec:akari}

\emph{Akari} is a logic puzzle first published in 2001 by
\emph{Nikoli}, a Japanese games and logic puzzles publisher. It is
also called \emph{Light up}\footnote{\url{https://en.wikipedia.org/wiki/Light_Up_(puzzle)}}. The
goal of the game is to illuminate all the cells of a rectangular grid
by placing some lights. The lights illuminate horizontally and vertically all
adjacent white cells. The grid also contains black cells, which represent walls that the light cannot cross.
Some  black cells can contain an integer between 0 and 4.
To illuminate all the cells, lights have to be added in the white
cells according to the following constraints:
\begin{compactenum}
\item Two lights cannot illuminate each other.
\item The number in a black cell indicates how many lights are present in the adjacent white cells, \emph{i.e.}, in the white cells directly above, below, left and right of the number.
\item All the white cells are illuminated by at least one light.
\end{compactenum}
In Figure~\ref{fig:akari}, we give a simple example of an Akari game.
It is easy to verify that the three constraints are satisfied, but solving Akari was shown to be NP-complete \emph{via} a reduction from Circuit-SAT~\cite{akariNP05}.

\begin{figure}[bt]
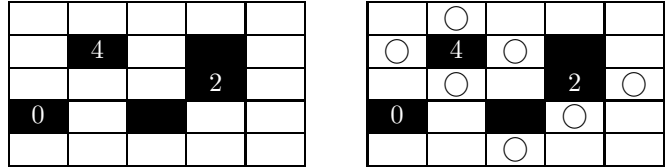

\begin{center}
  \begin{tabular}{|c|c|c|c|c|}
\hline
 {\color{white}$\bigcirc$} & {\color{white}$\bigcirc$} & {\color{white}$\bigcirc$} & {\color{white}$\bigcirc$} & {\color{white}$\bigcirc$} \\ \hline
   &\textcolor{white}{\cellcolor{black}{4}} & &\cellcolor{black} & \\ \hline
   & & & \textcolor{white}{\cellcolor{black}{2}} & \\ \hline
 \textcolor{white}{\cellcolor{black}{0}} & &\cellcolor{black} & & \\ \hline
   & & & & \\ \hline
\end{tabular}
\qquad
\begin{tabular}{|c|c|c|c|c|}
  \hline
    & $\bigcirc$ &   &  &   \\ \hline
 $\bigcirc$ &\textcolor{white}{\cellcolor{black}{4}}&  $\bigcirc$ &\cellcolor{black}  &   \\ \hline
    & $\bigcirc$ &   & \textcolor{white}{\cellcolor{black}{2}}  &  $\bigcirc$ \\ \hline
  \textcolor{white}{\cellcolor{black}{0}}&   &\cellcolor{black}  & $\bigcirc$ & \\ \hline  &  & $\bigcirc$   &  &   \\ \hline
\end{tabular}
\end{center}
\caption{Example of an Akari grid and its solution.}\label{fig:akari}
\end{figure}

\subsection{ZKP Construction for Akari}

Our aim is to give a ZKP proof such that \Alice, the prover denoted by
$P$, proves to \Tom, the verifier denoted by $V$, that he knows a
solution of a given Akari grid $G$. We assume that the grid is printed on
a sheet of paper, and that we can use a \emph{shuffle 
functionality}. We use two kinds of cards to model the fact that
there is one light or not in one cell of the solution. The symbol
$\bigcirc$ represents the presence of a light, and an empty card the
absence of a light.%
\subparagraph*{Setup:}
According to his solution, the prover $P$ places, face
down on each white cell, $\bigcirc$ cards when there is a light, and empty cards
when there is none.  Note that all the cards in the packet on the same cell are
thus identical.  Placing the cards happens in three phases:
\begin{compactenum}
\item In all white cells, $P$ adds 2 cards face down (first illustration in
  Figure~\ref{fig:akariproof}).
\item $P$ adds an additional card for all white cells adjacent to a black cell that contains a number.
  For instance, in the second illustration in Figure~\ref{fig:akariproof}, for the black cell containing the number $4$, $P$ adds one card in all positions with a grey background, and similarly for the black cells containing $0$ and $2$.
\item Finally, for each white cell in the grid, $P$ places a card on the cell and all horizontally and vertically adjacent white cells.
  The third illustration in Figure~\ref{fig:akariproof} shows this for the top left cell: $P$ adds a card to the cell itself, and to the six adjacent cells (all marked in grey, two below and four to the right).
  This has to be done for all white cells.
\item The same three steps are repeated once more to build a dual second grid, \emph{i.e.}, a grid
  where all empty cards are replaced by light cards and \emph{vice-versa}.
\end{compactenum}
At the end of the algorithm, $P$ has added multiple cards face
down on each cell of the grid $G$ and its dual grid $\hat{G}$.%

\begin{figure}[tb]
  \begin{center}
\begin{tabular}{|c|c|c|c|c|}
 \hline
 $2$ & $2$ & $2$ & $2$ & $2$ \\ \hline
 $2$ &\textcolor{white}{\cellcolor{black}{4}}& $2$ &\cellcolor{black} & $2$ \\ \hline
 $2$ & $2$ & $2$ & \textcolor{white}{\cellcolor{black}{2}} &  $2$ \\ \hline
  \textcolor{white}{\cellcolor{black}{0}}& $2$ & \cellcolor{black} & $2$ & $2$ \\ \hline
 $2$ & $2$ & $2$ & $2$ & $2$ \\ \hline
\end{tabular}
\begin{tabular}{|c|c|c|c|c|}
  \hline
  & \cellcolor{gray!30}$1$ &  &   & {\color{white}$1$} \\ \hline
 \cellcolor{gray!30}$1$ &\textcolor{white}{\cellcolor{black}{4}}& \cellcolor{gray!30}$1$ &\cellcolor{black} &  \\ \hline
  & \cellcolor{gray!30}$1$ &  & \textcolor{white}{\cellcolor{black}{2}} &  \\ \hline
  \textcolor{white}{\cellcolor{black}{0}}&  & \cellcolor{black} &  &  \\ \hline
  &  &  &  &  \\ \hline
\end{tabular}
\begin{tabular}{|c|c|c|c|c|}
  \hline
 \cellcolor{gray!30}$1$ & \cellcolor{gray!30}$1$ & \cellcolor{gray!30}$1$ &  \cellcolor{gray!30}$1$ & \cellcolor{gray!30}$1$ \\ \hline
 \cellcolor{gray!30}$1$ &\textcolor{white}{\cellcolor{black}{4}}&  & \cellcolor{black} &  \\ \hline
 \cellcolor{gray!30}$1$ &  &  & \textcolor{white}{\cellcolor{black}{2}} &  \\ \hline
  \textcolor{white}{\cellcolor{black}{0}} &  & \cellcolor{black} &  & \\ \hline
  &  &  &  &  \\ \hline
\end{tabular}
  \end{center}
  \caption{Illustrations for steps 1 to 3 in the Akari ZKP
    construction. From left to right: {\bf Step 1:} two cards added on all
    white cells. {\bf Step 2:} cards added around black cells with
    number. {\bf Step 3:} cards added per cell as a function of
    adjacent cells.}\label{fig:akariproof}
\end{figure}

\subparagraph*{Verification:}
To check the correctness of the solution, the verifier picks a random bit $c$ in $\{0,1\}$ and executes the following verifications:
\begin{compactdesc}
\item[if $c=0$:] For each cell of the grid $G$ the verifier places the packet of
  cards in an envelope, then he does the same for each cell of grid
  $\hat{G}$. He shuffles the envelopes using the shuffle functionality, and then
  opens each envelope to check whether it contains only light cards or only
  empty cards.  With this he is sure that all the cards on
  a given cell were identical.
\item[if $c=1$:] The verifier discards all the cards on the dual grid
  $\hat{G}$ and checks whether the $3$ constraints that a solution should 
  satisfy are respected using the cards on the grid $G$:
\begin{compactenum}
\item {\bf No two lights see each other:} For each row or
  column of adjacent white cells, $V$ randomly picks one
  card per cell to form a packet.
  Then $V$ passes the cards to the \emph{shuffle functionality} who 
  shuffles them, and passes them on to $P$. If all cards
  are empty cards then $P$ adds a $\bigcirc$ card, otherwise he adds
  an empty card. $P$ returns the cards, once again shuffled, to $V$, who checks that there
  is exactly one light card in the packet. 
\item {\bf Black cells:} For each black cell that contains a number
  $l$, $V$ picks one card in all adjacent cells and gives them to the
  shuffle functionality. He then looks at the cards and checks that
  the number $l$ corresponds to the number of $\bigcirc$ cards. In
  second illustration of Figure~\ref{fig:akariproof}, a card from all
  grey cells has to be collected to verify the black cell containing
  the number $4$.  %
\item {\bf All cells are illuminated:} For each cell, $V$ picks one
  card from the cell plus one card from all horizontally and
  vertically adjacent white cells (for example, for the top left cell
  in Figure~\ref{fig:akariproof}, he takes one card from all grey
  cells).  
  $V$ passes all collected cards face down to~$P$.  If there are two light cards
  in the received cards, then $P$ adds one empty card, otherwise he adds one
  $\bigcirc$ card. $P$ passes the cards to the shuffle functionality, and
  returns them to $V$.
  Then $V$ opens the cards and checks that there is exactly two light cards in
  the packet.  %
\end{compactenum}
\end{compactdesc}
This protocol is repeated $\seck$ times where $\seck$ is a chosen security
parameter.  Note that the number of cards to place on the grid and the number of
verification steps is polynomial in the size of the grid, making our ZKP
polynomial both for the prover and the verifier.

\subsection{Security Proofs for Akari}
We now prove the security of our construction.

\begin{lemma}[Akari Completeness]
If $P$ knows a solution of a given Akari grid, then he is able to convince $V$.
\end{lemma}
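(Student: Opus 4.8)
The plan is to fix a valid solution $S$ of $G$ known to $P$, assume $P$ follows the protocol honestly, and show that $V$ accepts whichever challenge bit $c\in\{0,1\}$ is drawn (hence accepts in all $\seck$ repetitions). Being honest means: during \emph{Setup}, $P$ puts a $\bigcirc$ card on each white cell lit in $S$ and an empty card on each other white cell, does the symmetric thing on $\hat G$, and adds the extra cards exactly as steps~1--4 prescribe; during \emph{Verification}, $P$ adds the completion card according to the stated rule. Since the shuffle functionality only permutes a packet and never changes its multiset of cards, it is irrelevant here, so it suffices to count $\bigcirc$ cards in each packet that $V$ inspects.

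The case $c=0$ is immediate: by construction all cards lying on a fixed cell of $G$ (and likewise of $\hat G$) are copies of one and the same card determined by $S$, so each envelope $V$ opens contains only $\bigcirc$ cards or only empty cards, which is exactly what $V$ checks.

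For $c=1$ I would first record a bookkeeping identity that guarantees $V$ can actually carry out every draw: for each white cell $x$, the number of cards placed on $x$ by steps~1--3 equals the number of cards drawn from $x$ over the three verification steps, namely $2$ (one per maximal run through $x$, step~1), one for each numbered black cell adjacent to $x$ (step~2), and $1$ plus the number of white cells other than $x$ in the two maximal runs through $x$ (step~3). Then I would treat the three checks using the three constraints: \textbf{(1)} constraint~1 gives at most one light of $S$ per maximal run of white cells, so $V$'s packet has $0$ or $1$ light card and $P$'s completion card makes it exactly one; \textbf{(2)} constraint~2 says a numbered black cell with value $l$ has exactly $l$ lit white neighbours in $S$, so $V$ counts exactly $l$ light cards; \textbf{(3)} writing $a\in\{0,1\}$ for whether $x$ is lit in $S$, and $h,v$ for the numbers of lights of $S$ in the remainder of the horizontal, resp.\ vertical, run through $x$, the packet $V$ forms has $a+h+v$ light cards, where constraint~1 forces $a=1\Rightarrow h=v=0$ and $a=0\Rightarrow h,v\le 1$, while constraint~3 forces $a+h+v\ge 1$; hence $a+h+v\in\{1,2\}$ and $P$'s completion card brings the total to exactly two, as $V$ requires. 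In every case $V$ accepts.

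The only genuinely delicate point is check~(3): the target value ``two'' is not obvious, and it is exactly the interplay of constraints~1 and~3 that pins it down, so I would spell out that short case analysis in full. Everything else is routine once the honest strategy is fixed and the shuffle is observed to be content-preserving; I would also state the card-counting identity above carefully, since it is what makes the protocol runnable at all.
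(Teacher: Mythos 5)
Your proposal is correct and follows essentially the same route as the paper's proof: a case split on the challenge $c$, with the three verification checks for $c=1$ discharged by the three constraints of a valid solution. Your additional card-counting identity (placements in steps 1--3 match the draws in the three checks) and the explicit $a+h+v\in\{1,2\}$ case analysis for the illumination check are sound refinements of points the paper states only informally.
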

  \begin{proof}
If $P$ knows the solution of an Akari grid, he can place the cards on the grids
($G$ and $\hat{G}$) following our ZKP algorithm. Then on each cell of both grids
there is a packet of identical cards, showing either nothing or a light,
depending on whether there is a light in the solution. The verifier picks $c$
among $\{0,1\}$.
\begin{compactdesc}
\item[if $c=0$:] The verifier $V$ puts each packet on each cell for the two grids 
  into envelopes and shuffles them. Since all the packets on cells on both
  grids contain the same kind of card, the verifier accepts.
\item[if $c=1$:] In the following, when taking a card from a cell, the verifier $V$ 
  randomly picks one card among all the cards in that cell. This has no effect when 
  the prover is honest since all of them are identical. Using
  these cards, the verifier checks the three properties that a solution should satisfy.
\begin{compactitem}
\item To check that two lights cannot see each other, for all lines and columns
  the verifier takes one card from each cell inside the line or column
  {and asks the prover to add one extra card}.
    The resulting selection is then shuffled. 
  As the solution is correct, 
  there can be either no light or one light, but not more, as otherwise two 
  lights would see each other. If there is none, $P$ adds one, 
  otherwise he adds an empty card. Thus the cards contain exactly one light, 
  which $V$ will successfully verify.
\item For all black cells that contain a number, the verifier picks
  one card in all adjacent white cells, 
uses the shuffle functionality on the selected cards,
  and 
  then verifies that they contain the same number of lights as is
  written in the black cell. As $P$'s solution is correct, the 
  verification will succeed.
\item To verify that each cell of the grid is illuminated by a
  light, for each cell the verifier picks one card from all white cells
  which are horizontally and vertically adjacent. 
  He asks $P$ to add exactly one card according to the algorithm and uses the
  shuffle functionality on the remaining cards.  As the solution is correct,
  there will be either one or two lamps in the packet since each cell is
  illuminated, so after adding the right card there will be exactly two. Then
  $V$ verifies that the set of cards contains exactly two lights, which will
  succeed.
\end{compactitem}
\end{compactdesc}
\vspace{-\baselineskip}
\end{proof}

\begin{lemma}[Akari Soundness]
If $P$ does not know a solution of a given Akari's grid of size $n$ then he is
not able to convince $V$ except with a negligible probability lower than
$p=(\frac{1}{2})^\seck$ when the protocol is repeated $\seck$ times.
\end{lemma}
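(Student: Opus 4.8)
### Proof Plan for Akari Soundness

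The plan is to show the contrapositive: if $P$ can make $V$ accept with probability greater than $(1/2)^{\seck}$ across the $\seck$ repetitions, then $P$ must in fact know a valid solution. Since the repetitions are independent and the challenge bit $c$ is chosen uniformly, it suffices to analyze a single round and show that a prover who cannot answer \emph{both} challenges $c=0$ and $c=1$ correctly for a fixed card configuration is caught with probability at least $1/2$ in that round; the usual amplification over $\seck$ rounds then gives the bound $p = (1/2)^{\seck}$. So the core of the argument is: for any fixed configuration of face-down cards placed by a (possibly cheating) $P$ before the challenge, if $P$ succeeds on $c=0$ then the extracted values form a genuine Akari solution, hence $P$ also succeeds on $c=1$ --- equivalently, if there is no valid solution encoded, at least one of the two challenges fails.

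First I would argue the $c=0$ branch forces consistency: after $V$ envelopes each cell's packet on $G$ and $\hat G$, shuffles, and opens them, acceptance requires that every packet be monochromatic (all $\bigcirc$ or all empty). Thus passing $c=0$ means each cell of $G$ carries a well-defined bit $b_{i,j} \in \{0,1\}$ (light / no light), and each cell of $\hat G$ carries the complementary constant --- so $G$ unambiguously encodes a candidate assignment $s$. Note that $P$ cannot adapt the card contents to $c$, since the cards are committed (face down / in envelopes) before $c$ is revealed; this is where the commitment/hiding role of the shuffle functionality and envelopes is used.

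Next I would argue the $c=1$ branch faithfully checks that $s$ satisfies the three Akari constraints, \emph{given} that each cell is monochromatic (which is all we can assume, but it is exactly what $c=0$ guarantees when both could be demanded). For constraint~1, in a line/column of white cells $V$ draws one card per cell, $P$ adds one card, the packet is shuffled, and $V$ checks for exactly one $\bigcirc$; since each drawn card equals the cell's bit, the packet before $P$'s addition contains exactly $\sum b_{i,j}$ lights over that segment, and $V$ accepts only if that sum is $0$ or $1$, i.e.\ no two lights see each other --- and crucially $P$'s single added card cannot turn a sum of $\geq 2$ into exactly $1$. For constraint~2, $V$ draws one card from each white neighbor of a numbered black cell $l$ and checks the $\bigcirc$ count equals $l$; again each card reflects the true bit, so this passes iff the numeric clue holds. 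For constraint~3, for each cell $V$ collects one card from the cell and its white neighbors, $P$ adds one card, shuffles, and $V$ checks for exactly two $\bigcirc$'s; the collected cards contain exactly (value at the cell) $+$ (number of lights seeing it), so acceptance forces this to be $1$ or $2$, i.e.\ the cell is lit or is itself a light --- combined with constraint~1 this means every white cell is illuminated. Hence passing both challenges implies $s$ is a valid solution.

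The main obstacle I anticipate is handling the prover's ability to add cards during the $c=1$ verification: I must confirm that in each of the three checks, an adversarial choice of the added card (and the fact that $P$ sees the collected cards before adding) cannot repair a configuration that violates a constraint --- i.e.\ the ``target count'' ($1$ for constraints~1 and~3 after adjustment, exact $l$ for constraint~2) is reachable by adding one card \emph{only} when the underlying sum already lies in the allowed range. This is a small case check on each verification (e.g.\ from a packet with $\geq 2$ lights, adding one card yields $\geq 2$ lights, never exactly $1$), but it is the step where the argument could break if the protocol were designed carelessly. Finally I would conclude: since for any committed configuration with no valid solution at least one of the two equiprobable challenges rejects, a single round catches a cheating $P$ with probability $\geq 1/2$, and by independence of the $\seck$ rounds the overall cheating probability is at most $(1/2)^{\seck} = p$, which is negligible in $\seck$.
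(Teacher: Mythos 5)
Your proposal is correct and takes essentially the same approach as the paper's proof: passing $c=0$ forces every cell-packet to be monochromatic, and passing $c=1$ (given monochromatic packets, with the observation that the prover's single added card cannot repair a count that is already out of range) forces the encoded assignment to satisfy the three Akari constraints, so a prover without a solution fails at least one of the two equiprobable challenges in each round and survives $\seck$ rounds with probability at most $(\frac{1}{2})^\seck$. One caveat: the sentence in your plan ``if $P$ succeeds on $c=0$ then the extracted values form a genuine Akari solution, hence $P$ also succeeds on $c=1$'' is a misstatement (passing $c=0$ alone only yields well-formedness), but your ``equivalently'' rephrasing and the detailed argument that follows state and prove the intended claim correctly.
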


\begin{proof}
We call a \emph{cell-packet} the set of cards that are placed in one cell. When
the verifier is collecting some cards during the verification phase for one
cell, we also call a \emph{row-packet} the set of cards that are collected in an
horizontal line, a \emph{column-packet} the set of cards that are collected in a
vertical line and \emph{cross-packet} the set of cards that are collected in
horizontally and vertically adjacent white cells of a given cell.

Given a grid of size $n$, we show that if $P$ is able to perform the proof for
  any challenge $c$, then he has a solution of the grid. This implies by
  contraposition that if the prover does not have a solution, then he fails the
  procedure for either $c = 0$, or $c = 1$, or both. In that case the
  probability that the verifier asks him a challenge for which he does not have
  a solution is at least $\frac{1}{2}$, which results in the probability of
  $p=(\frac{1}{2})^\seck$ as the protocol is repeated $\seck$ times.

Thus it suffices to show that if $P$ is able to perform the proof for $c=0$ and $c=1$, then he has a solution of the grid.

If $P$ is able to perform the proof for $c=0$, then his commitment is
well-formed, \emph{i.e.}, all the cards in each cell-packet are the same.

If $P$ is able to perform the proof for $c=1$, then his solution passes all the
verifier's checks. Moreover, we show that if the commitment of the prover is
well-formed (which we know from the above), and this commitment does not
correspond to a solution, then the verifier detects it with our ZKP
algorithm. This implies that the prover has a solution to the grid, which is
what we need to show to conclude the proof.

Suppose now that the commitment is well-formed, and that it does not correspond to a 
solution. Then we can distinguish three cases for the commitment, each corresponding to one constraint of the game that is not respected:
\begin{compactitem}
\item There are two lights that can see each other. According to our ZKP, the
  verifier picks a row-packet or a column-packet for each row or column and passes it to
  the prover. The prover adds one card and gives it back to the verifier. Then
  the verifier checks that there is exactly one light in the packet. If there are two
  lights that can see each other, the prover cannot cheat and the verifier will see 
  at least two lights.
\item There is not the right number of lights close to a black cell containing a
  number. Then the verifier discovers that the number of lights does not match when he
  performs the second point of our ZKP algorithm.
\item A cell is not illuminated. According to our algorithm, the verifier picks
  a cross-packet for this cell. Then he gives the packet to the prover who adds
  one extra card. Then the verifier checks that there are exactly two lights in
  the packet. If the cell is not illuminated, he will have at most one light in
  the packet and thus detect the error.
\end{compactitem}
Thus, if the commitment is well-formed and does not correspond to a
solution, the verifier will detect it. Hence, if the prover passes all the
verifier's checks, he has a solution, which concludes the proof.
\end{proof}
Note that an optimal soundness of $0$ can be obtained using a stronger
model, namely the \emph{triplicate functionality}
of~\cite[\S~4.2]{Gradwohl:2007:CPZ:1760607.1760623}. Instead of the challenge
$c=0$, one uses cards that can be cut in $k$ equal parts (a ``\emph{$k$-plicate functionality}''), thus guaranteeing identical parts.
\begin{lemma}[Akari Zero-Knowledge]\label{lem:akari:zk}
$V$ learns nothing about $P$'s solution of a given Akari grid.
\end{lemma}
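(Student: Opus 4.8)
The plan is to construct a simulator $\texttt{Sim}(G)$ that, given only the grid $G$ (but not the solution), produces a transcript following exactly the same probability distribution as a real interaction between the honest prover and an honest-but-curious verifier. Since the protocol is repeated $\seck$ times independently with a fresh challenge $c$ each round, it suffices to simulate one round and then concatenate. In a single round the simulator first guesses the challenge bit $c' \in \{0,1\}$ uniformly at random, prepares a commitment tailored to that guess, runs the verifier, and if the verifier's actual challenge $c$ differs from $c'$ it rewinds and retries; since the guess is correct with probability $1/2$ and the verifier's view up to issuing the challenge is independent of the (hidden) commitment contents, this yields a perfect simulation in expected polynomial time.

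The core of the argument is describing what $\texttt{Sim}$ commits to in each of the two cases. If $c' = 0$: the simulator needs each cell-packet to be monochromatic, but there is no constraint linking the colour to any solution, so it simply places, say, all empty cards on every cell of $G$ and all $\bigcirc$ cards on every cell of $\hat G$ (or any other all-identical choice). When the verifier seals the packets, shuffles, and opens them, he sees exactly what he would see with the honest prover, because in the real protocol the envelopes are shuffled before opening, so the verifier only learns the \emph{multiset} of ``monochromatic / not'' verdicts, which is identical. If $c' = 1$: the simulator must produce cards on $G$ that pass all three constraint checks. Here the key observation is that every verification in the $c=1$ branch reveals only a shuffled packet whose light-count is fixed by the protocol (exactly one light per line, exactly the written number $l$ around a numbered black cell, exactly two lights per cross-packet) — the prover's ``add a card'' step is precisely what forces these counts regardless of the underlying cards. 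Thus the simulator can place an \emph{arbitrary} card in each cell of $G$; for each line/black-cell/cross-packet check it picks the cards the verifier would pick, and plays the prover's role by adding whatever card makes the revealed count correct, then shuffles. The verifier's view is a sequence of shuffled packets with the prescribed light-counts, exactly as in the real run.

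I would then assemble these pieces: (i) argue that in the real protocol, for a \emph{fixed} value of $c$, the verifier's entire view consists only of shuffled (hence order-independent) packets whose contents — number of light cards — are completely determined by $c$, the grid $G$, and the protocol rules, never by which solution the prover used; (ii) conclude that $\texttt{Sim}$'s output conditioned on $c' = c$ is identically distributed to the real transcript; (iii) handle the rewinding bookkeeping to get a genuine probabilistic polynomial-time simulator with output distribution equal to the real one. The main obstacle — and the step deserving the most care — is (i): one has to check, for \emph{each} of the three constraint sub-checks and for the $c=0$ check, that the shuffle functionality genuinely erases all information beyond the intended count/verdict, in particular that the verifier's random choice of ``one card per cell'' leaks nothing extra when the cell-packets are monochromatic (true by construction after the $c=0$ check guarantees well-formedness, and true for the simulator since it also makes them monochromatic) and that the prover's card-addition step is deterministic given the packet so it introduces no solution-dependent randomness. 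Once that is established, the simulation is perfect and zero-knowledge follows.
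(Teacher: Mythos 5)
Your $c=0$ simulation matches the paper's (all empty cards on $G$, all $\bigcirc$ cards on $\hat G$), and the guess-and-rewind framing is an acceptable alternative to the paper's per-challenge simulator. But the $c=1$ branch has a genuine gap: your key claim that ``the prover's add-a-card step is precisely what forces these counts regardless of the underlying cards'' is false, and in fact it must be false, since otherwise soundness would collapse. In the black-cell check the prover adds \emph{no} card at all -- the verifier simply opens one card from each adjacent cell -- so the number of $\bigcirc$ cards revealed is entirely determined by what was committed; a simulator that committed arbitrary (say all empty) cards cannot pass a black cell labelled $4$, and ``adding whatever cards make the count correct'' would change the (publicly known) packet size and be immediately distinguishable. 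Likewise in the illumination check the prover may add exactly one card, so a cross-packet with zero lights can never reach the required two. Making all three checks pass under the protocol's actual rules would force the simulator's committed cards to satisfy all three Akari constraints, i.e.\ to \emph{be} a solution, which defeats the purpose.

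The missing idea is the one the paper takes from the Sudoku protocol of Gradwohl \emph{et al.}: the simulator is given an advantage over the real prover, namely that when it plays the shuffle functionality it may \emph{swap} the collected packet for a different, pre-prepared one. The paper's simulator prepares, for each row/column, black cell and cross, a packet of the correct size and with the correct number of $\bigcirc$ cards (its types $1$--$4$), substitutes it at the shuffle step, and only then lets the (simulated) prover add the single card prescribed by the protocol; since in the real protocol every packet is shuffled face down before being opened, the swapped transcripts are identically distributed to the real ones. Your step (i) -- that for an honest prover the revealed counts depend only on $G$ and $c$ -- is correct, but without the swap power (or some equivalent mechanism) your simulator has no way to realize those counts, so the proof as written does not go through.
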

\begin{proof}
We use the proof technique described in~\cite[Protocol~3]{Gradwohl:2007:CPZ:1760607.1760623}: to show zero-knowledge,
we have to describe an efficient simulator that simulates any interaction between
a cheating verifier and a real prover. The simulator does not have a correct
solution, but it does have an advantage over the prover: when shuffling decks, it
is allowed to swap the packets for different ones. We thus show how to
construct a simulator for each challenge $c$ in $\{0,1\}$ in the Akari ZKP.
\begin{compactdesc}
\item[if $c=0$:] The simulator simulates the prover $P$ as
  follows: it fills the grid $G$ only with empty cards and the dual 
  grid $\hat{G}$ only with light cards. If the verifier chooses the challenge
  $c=0$, the verifier puts each packet of cards in an envelope of both grids and 
  shuffles them. Then he checks that all packets contain only the same card.
  This perfectly simulates an interaction with the real prover because there are 
  two packets for each cell: one with only light cards and one with only empty cards.

  \item[if $c=1$:] The simulator first prepares the following packets:
\begin{compactitem}
\item For each white cell in a row of $r$ white cells and in a column of $c$  white cells, the simulator sets:
  \begin{compactitem}
  \item A packet containing  $(c)$ empty cards (denoted type $1$).
  \item A packet containing  $(r)$ empty cards (denoted type $2$).
  \item A packet containing  $(c+r-3)$ empty cards and $2$ $\bigcirc$ cards (denoted type $3$).
  \end{compactitem}
 \item For each black cell with number $l$, the simulator sets a packet containing $l$ $\bigcirc$ cards and $4-l$ empty cards. Such a packet is of type $4$.
\end{compactitem}

Then, still when the verifier has chosen the challenge $c=1$, the verifier
discards the dual grid $\hat{G}$ and execute the following checks:
\begin{compactitem}
\item {\bf No two lights see each other:} for each horizontal (resp.
  vertical) line of adjacent white cells, $V$ randomly picks one
  card per cell to form a packet.
  Then the simulator, acting as the the shuffle functionality, replaces this
  packet by the type $1$ (resp. $2$) packet corresponding to the same cell. 
  The simulator, now acting as $P$, adds one $\bigcirc$ card to the packet and returns the cards to $V$, who can check that there is exactly one light card in the packet. 
  
\item {\bf Black cells:} for each black cell that contains a number $l$, $V$
  picks one card in all adjacent cells to form a packet. Then the simulator,
  acting as the shuffle functionality, replaces this packet by the type
  $4$ packets corresponding to the same cell. $V$ looks at the cards and can check
  that $l$ corresponds to the number of $\bigcirc$ cards.
\item {\bf All cells are illuminated:} for each cell, $V$ picks one
  card from the cell plus one card from all horizontally and
  vertically adjacent white cells. 
 Then the simulator, acting as the shuffle functionality, replaces this packet
 by the type $3$ packet corresponding to the same cell.
 The simulator, now acting as $P$, adds one empty card to the packet
 and returns the cards to $V$, who can check that there are exactly two light cards
 in the packet.
\end{compactitem}
For each cell, all cards are face down before the last shuffle of all
packets and the simulated proofs and the real proofs are indistinguishable.
Therefore, $V$ learns nothing from the verification phases and the protocol is zero-knowledge.
\end{compactdesc}
\vspace{-\baselineskip}
\end{proof}

\section{Takuzu}\label{sec:takuzu}

\emph{Takuzu} is a puzzle invented by Frank Coussement and Peter De
Schepper in 2009\footnote{\url{https://en.wikipedia.org/wiki/Takuzu}}. 
It was also called \emph{Binero}, \emph{Bineiro}, \emph{Binary
  Puzzle}, \emph{Brain Snacks} or \emph{Zernero}. A similar game
(using crosses and circles) was proposed in 2012 by Aldolfo Zanellat
under the name of \emph{Tic-Tac-Logic}.  The goal of Takuzu is to fill a rectangular
grid of even size with 0's and 1's. An initial Takuzu grid already contains a 
few filled cases. A grid is solved when it is full (\emph{i.e.}, no empty cases)
and respects the following constraints:
\begin{compactenum}
\item Each row and each column contains exactly the same number of 1's
  and 0's.
\item Each row is unique among all rows, and each column is unique
  among all columns.
\item In each row and each column there can be no more than two same
  numbers adjacent to each other; for example $110010$ is possible,
  but $110001$ is impossible.
\end{compactenum}

Figure~\ref{fig:takuzu} contains a simple $4 \times 4$ Takuzu grid and
its solution. We can verify that each row and column is unique,
contains the same number of 0's and 1's, and there are never three
consecutive 1's or 0's. The problem of solving a Takuzu grid was 
proven to be NP complete in~\cite{BPNP12,BPNP15}.

\begin{figure}[tb]
\begin{center}
\begin{tabular}{|c|c|c|c|}
  \hline
   & {\bf 1} &  & {\bf 0}   \\ \hline
   &  & {\bf 0}&    \\ \hline
   & {\bf 0} &  &    \\ \hline
  {\bf 1} & {\bf 1} &  & {\bf 0}   \\ \hline
\end{tabular}\qquad
\begin{tabular}{|c|c|c|c|}
  \hline
   0 & {\bf 1} & 1  & {\bf 0}   \\ \hline
   1 &  0 & {\bf 0}& 1   \\ \hline
   0 & \bf{0} & 1 & 1   \\ \hline
  {\bf 1} & {\bf 1} & 0 & {\bf 0}   \\ \hline
\end{tabular}
\end{center}
\caption{Example of a Takuzu grid and its solution.}\label{fig:takuzu}
\end{figure}

\subsection{ZKP Construction for Takuzu}\label{ssec:takuzuzkp}

Similar to Akari, our aim is to give a ZKP proof such that the prover $P$, now
\Tom, proves to the verifier $V$, now \Alice, that he knows a solution of a
given Takuzu grid.  We assume that the grid is printed on a sheet of paper, and
use two kinds of cards: cards with a 0 or a 1 printed on them.  Moreover, we
also use a second grid, a piece of paper, and an envelope.%

\subparagraph*{Setup:}
Let $G$ be the $n\times{}n$ Takuzu grid and $S$ its
solution known by $P$. The prover chooses uniformly at random two permutations:
$\pi_R$ for the rows, and $\pi_C$ for the columns. He then computes $S'=\pi_R (
\pi_C(S))$, and writes the two permutations $\pi_C$ and $\pi_R$ on a paper, and
inserts it into an envelope $E$.  Then $P$ places cards face down on the second
grid according to $S'$.  We denote these cards $\tilde{S'}$.%

\subparagraph*{Verification:}
The verifier $V$ picks $c$ randomly among $\{0,1,2,3\}$.
\begin{compactdesc}
\item[If $c=0$:] $P$ opens the envelope $E$. Then $V$ takes the initial grid $G$
  and computes $G'=\pi_R ( \pi_C(G))$.  $V$ reveals those cards in $\tilde{S'}$
  that are in places of initial values of the grid in $G'$.  He checks that the
  all revealed cards are equal to the initial values given in $G'$.
\item[If $c=1$:] 
    The verifier $V$ picks $d$ randomly among $\{0,1\}$. 
    If $d=0$ (resp. $d=1$), for each row (resp. column),
    the verifier takes all the cards in the row (resp. in the column) without
    looking at them. Each one of these $n$ decks is shuffled by the shuffle functionality and then
    the verifier opens all cards and checks that each deck contains exactly
    the same number of 1's and 0's.
\item[If $c=2$:]
    The verifier checks that a randomly chosen row or column is different from all 
    other rows or columns. %
    For this, the verifier picks randomly one row or one column.
    The verifier opens all the cards of his chosen row (or column). For each of the 
    $n-1$ other rows (or columns) the verifier takes all the cards that are in the 
    same column (or row) as a 0 in the revealed row (or column), without looking at 
    them. Each one of these $n-1$ sets of cards is shuffled by the
    shuffle functionality and given back to the prover, who reveals one
    card per set that is a 1. Thus each one of the other $n-1$ rows (or columns) has 
    a 1 where the revealed row (or column) has a 0, they are thus different from the 
    revealed row. If there are
    several 1's in a deck, the prover randomly chooses which one to reveal.
\item[If $c=3$:] $P$ opens $E$. Then $V$ permutes (face down) the
  cards of $\tilde{S}'$ to obtain $\tilde{S} = \pi_c^{-1} ( 
  \pi_R^{-1}(\tilde{S'}))$. Then, $V$ picks $d$   randomly among $\{0,1\}$ and  $e$ randomly among $\{1,2,3\}$.
  \begin{compactdesc}
    \item[If $d=0$:] For each row, $V$ sets $x=\lfloor \frac{n-e}{3}  \rfloor$ decks of three cards  $\{(e+ 3\cdot i + 1, e+3\cdot i+2, e+3\cdot i +3)\}_{\{0\leq i < x\}}$  where the triplet $(i,j,k)$ denotes a deck containing the  $i^\text{th}$, the $j^\text{th}$ and the $k^\text{th}$ cards of the row.
    \item[If $d=1$:] For each column, $V$ sets $x=\lfloor \frac{n-e}{3}  \rfloor$ decks of three cards  $\{(e+ 3\cdot i + 1, e+3\cdot i+2, e+3\cdot i +3)\}_{\{0\leq i < x\}}$  where the triplet $(i,j,k)$ denotes a deck containing the  $i^\text{th}$, the $j^\text{th}$ and the $k^\text{th}$ cards of the column.
  \end{compactdesc}

  Then, $V$ gives the  decks to $P$ one by one. For each deck, the prover discards one of
  the two identical cards (\emph{e.g.}, a 1 if he has 101, and a 0 in
  case of \emph{e.g.}, 001). Then $P$ reveals the cards to $V$, who
  accepts only if he sees two different cards. Note that it is
  possible to do this verification step for several sequences of three
  cards.
\end{compactdesc}
To have the best security guarantees, the verifier should choose his challenges $c$, $d$, etc. such that each combination of challenges at the end has the same probability.
This protocol is repeated $\seck$ times where $\seck$ is a chosen
security parameter.  Note that the ZKP is again polynomial in the size
of the grid, as shown next.

\subsection{Security Proofs for Takuzu}%
We now prove the security of our construction. 
\begin{lemma}[Takuzu Completeness]
If $P$ knows a solution of a given Takuzu grid, then he is able to convince $V$.
\end{lemma}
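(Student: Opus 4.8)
The plan is to verify that an honest prover who knows a valid solution $S$ passes each of the four challenges $c\in\{0,1,2,3\}$ with probability $1$, so that by a union over the $\seck$ independent repetitions completeness follows. First I would fix the notation: $P$ has placed face-down cards $\tilde{S'}$ encoding $S'=\pi_R(\pi_C(S))$ on the second grid, and has sealed $\pi_C,\pi_R$ in the envelope $E$. Since $S$ satisfies all three Takuzu constraints and permuting rows and columns preserves each of them (uniqueness of rows/columns, the per-row/column count of $0$'s and $1$'s, and ``no three equal in a line''), $S'$ is itself a valid solution of the permuted grid $G'=\pi_R(\pi_C(G))$, and this observation is what every branch ultimately relies on.

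Then I would walk through the four cases. For $c=0$: opening $E$ lets $V$ recompute $G'$ honestly; the cards of $\tilde{S'}$ lying on the clued positions of $G'$ are exactly the corresponding entries of $S'$ by construction, so all revealed cards match and $V$ accepts. For $c=1$: whether $d=0$ (rows) or $d=1$ (columns), each deck handed to the shuffle functionality is the full set of cards of one line of $S'$; by constraint~1 applied to the valid solution $S'$, each such line contains exactly $n/2$ zeros and $n/2$ ones, so after shuffling $V$ sees the required balanced count. For $c=2$: $V$ reveals one chosen line $\ell$ of $S'$ and, for each of the other $n-1$ parallel lines, collects the cards sitting in the positions where $\ell$ shows a $0$; since $S'$ has all lines distinct (constraint~2) and every line is balanced, any other line must differ from $\ell$ in at least one of those positions, hence must carry at least one $1$ there, so the honest prover can always reveal a $1$ from each shuffled set (choosing arbitrarily if several). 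For $c=3$: $P$ opens $E$, $V$ applies $\pi_C^{-1}\circ\pi_R^{-1}$ to recover $\tilde{S}$ encoding the original $S$, and then for each triplet of three consecutive cards in a row ($d=0$) or column ($d=1$) starting at the random offset $e$, constraint~3 of $S$ guarantees the three entries are not all equal, so at least two distinct values appear; the prover discards one of the two equal cards and reveals two cards of different value, which $V$ accepts. (One should also note the triplets $V$ forms are disjoint and lie within a single line, so discarding is well-defined per deck.)

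Finally I would remark that each branch uses only operations the honest prover can perform faithfully, that the shuffle functionality is applied to decks the prover never needs to tamper with, and that the number of decks and reveals is polynomial in $n$ as claimed; iterating $\seck$ times, $P$ still succeeds every time, establishing completeness. I do not expect a genuine obstacle here — the only mildly delicate point is checking carefully in the $c=2$ and $c=3$ branches that ``the prover can always find the card he needs to reveal/discard'', which is precisely where constraints~2 and~3 (transported through the permutations) are invoked; everything else is bookkeeping about how row/column permutations preserve the Takuzu constraints.
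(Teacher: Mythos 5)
Your case analysis matches the paper's proof essentially line for line: each of the four challenges is handled by observing that the honest commitment $\tilde{S'}$ (or $\tilde{S}$ after undoing the permutations, for $c=3$) satisfies exactly the property the verifier checks, with the $c=2$ branch correctly combining distinctness of lines with the balanced count of $0$'s, just as the paper does. One small inaccuracy in your opening paragraph: it is \emph{not} true that arbitrary row and column permutations preserve the ``no three equal consecutive'' constraint (permuting columns can create three equal adjacent entries inside a row, and vice versa), so $S'$ need not be a full solution of $G'$. This slip is harmless, because — as in the paper, and as you in fact do in your $c=3$ branch — that constraint is only ever checked on the de-permuted grid $S$, where it holds by validity of the original solution; but you should not claim it for $S'$.
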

\begin{proof}
Suppose that $P$, knowing the solution $S$ of the grid $G$, runs the setup algorithm as described in Section~\ref{ssec:takuzuzkp}. Then we show that $P$ is able to perform the proof for any challenge $c \in \{0,1,2,3\}$.

\begin{compactdesc}
\item[If $c=0$:] Since $S$ is the solution of $G$, $G'=\pi_R (
  \pi_C(G))$ and $S'=\pi_R ( \pi_C(S))$,  the values of the cards in
  $\tilde{S'}$ that are in places of non empty cells of $G'$  are equal to the corresponding  values of the grid in $G'$.
\item[If $c=1$:] $S$ has $(n/2)$ occurrences of $1$ and $(n/2)$
  occurrences of $0$ on each row and column. Column and
  row permutations do not change this property. Therefore, since $S'=\pi_R (
  \pi_C(S))$, $S'$ has $(n/2)$ occurrences of $1$ and $(n/2)$
  occurrences of $0$ on each row and column.
\item[If $c=2$:] If a row (resp. column) of $S$ is unique and all rows (resp.
  columns) have the same number of 0's, then no other row of $S$ can have its
  0's at the exact same places. Column and
  row permutations in $S'$ do not change this property.
\item[If $c=3$:] Since $S'=\pi_R ( \pi_C(S))$, $\tilde{S} = \pi_C^{-1}
  (\pi_R^{-1}(\tilde{S'}))$ is the solution $S$ hidden with cards face
  down, and three consecutive cells of $S$ are never the same since
  $S$ is a valid solution. Then, using three consecutive cards, the
  prover is always able to discard one out of three cards such that
  two different cards remain.
\end{compactdesc}
\vspace{-\baselineskip}
\end{proof}

\begin{lemma}[Takuzu Soundness]\label{lem:takuzu:sound}
If $P$ does not provide a solution of a given $n\times{}n$ Takuzu grid $G$,
then he is not able to convince $V$ except with a negligible probability 
$\left( 1-\frac{1}{2n+9} \right)^\seck$
when the protocol is repeated $\seck$ times.
\end{lemma}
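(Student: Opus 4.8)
The plan is to exploit the standard commit--challenge structure of these physical ZKPs. The prover's setup fixes a \emph{commitment} consisting of the face-down cards $\tilde{S'}$ on the second grid together with the sealed envelope $E$ containing a pair of permutations; once this is fixed, the verifier's random choices select one of a fixed finite set of mutually exclusive ``challenge leaves''. I will show that if the commitment does not encode a valid solution of $G$, then at least one leaf causes $V$ to reject \emph{with certainty}; dividing by the number of leaves yields the per-round soundness error, and independence of the $\seck$ rounds gives the stated bound. First I would count the leaves, using the instruction that the verifier randomizes so that every combination of $(c,d,e,\text{line choice})$ is equiprobable: there is the single choice $c=0$; the two choices $c=1$ with $d\in\{0,1\}$; the $2n$ choices $c=2$ with a designated row or column to reveal; and the six choices $c=3$ with $(d,e)\in\{0,1\}\times\{1,2,3\}$. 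This gives $1+2+2n+6 = 2n+9$ equiprobable leaves, so each occurs with probability $\frac{1}{2n+9}$.

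Next, let $(\sigma_C,\sigma_R)$ be the permutations actually sealed in $E$ and set $\hat S := \sigma_C^{-1}(\sigma_R^{-1}(\tilde{S'}))$, the grid the prover is committed to as his claimed solution; he cannot later use different permutations, since $E$ is opened verbatim whenever it is used (in $c=0$ and $c=3$). Under the hypothesis, $\hat S$ is not a valid solution, and since the grid is full this forces at least one of the four defining conditions to fail; in each case I exhibit a rejecting leaf.
\begin{compactitem}
\item $\hat S$ disagrees with a filled cell of $G$: then $c=0$ rejects, because $V$ recomputes $G'=\sigma_R(\sigma_C(G))$ and the committed card of $\tilde{S'}=\sigma_R(\sigma_C(\hat S))$ at that position differs from $G'$; no prover interaction is involved.
\item some row (resp.\ column) of $\hat S$ is unbalanced: column permutations preserve each row's multiset and row permutations merely relabel rows, so $\tilde{S'}$ has the same multiset of row-contents as $\hat S$ (and likewise for columns); hence the leaf $c=1$, $d=0$ (resp.\ $d=1$) opens an unbalanced deck and rejects, again with no prover interaction.
\item two rows (resp.\ columns) of $\hat S$ coincide, say rows $i$ and $j$: the same invariance gives two coinciding rows of $\tilde{S'}$, and the leaf $c=2$ that reveals row $i$ rejects, since every card of row $j$ in a column where row $i$ shows $0$ is itself $0$, so the prover can produce no $1$ from the cards $V$ handed him.
\item three consecutive cells in some line of $\hat S$ are equal: here $V$ opens $E$, reconstructs $\hat S$ face down, and partitions each line into triplet decks; for the appropriate $d$ and $e$ one of these decks is exactly the offending triple, all of whose cards are equal, so whichever card the prover discards the two he reveals are equal and $V$ rejects.
\end{compactitem}
In every case $V$ rejects with probability at least $\frac{1}{2n+9}$ in a single round, so a cheating prover survives all $\seck$ rounds with probability at most $\left(1-\frac{1}{2n+9}\right)^\seck$, which is negligible in $\seck$.

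I expect the main obstacle to be the $c=3$ case. The crux is a purely combinatorial claim: as $(d,e)$ ranges over $\{0,1\}\times\{1,2,3\}$, the triplet decks $\{(e+3i+1,\,e+3i+2,\,e+3i+3)\}_{0\le i<x}$ must collectively contain, for every row and every column, every window of three consecutive cells --- only then is every violation of the adjacency constraint guaranteed to land inside some leaf's deck. One must check this window-covering property against the exact index ranges and against $x=\lfloor (n-e)/3\rfloor$ (handling the boundary of the line carefully, and noting $n$ is even with $n\ge 4$). A secondary, more routine, point is that the prover gains nothing from these decks being passed to him unshuffled: he knows the order of the three cards, but a window of three equal cards still leaves him with two equal cards to reveal, and the security model forbids him substituting cards for the ones he received. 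Finally, I would spell out precisely the two invariance statements under row/column permutations invoked in the $c=1$ and $c=2$ cases, which are elementary but need to be stated to justify that the checks performed directly on $\tilde{S'}$ certify the corresponding property of $\hat S$.
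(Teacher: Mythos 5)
Your proposal follows essentially the same route as the paper: it counts the $1+2+2n+6=2n+9$ equiprobable challenges, argues that a commitment not encoding a valid solution is rejected with certainty on at least one of them, and concludes the per-round bound $1-\frac{1}{2n+9}$ and hence $\left(1-\frac{1}{2n+9}\right)^{\seck}$ over $\seck$ repetitions; your case analysis merely spells out what the paper asserts more tersely. The window-covering subtlety you flag for $c=3$ is genuine but is glossed over by the paper's own proof as well (indeed, with $e\in\{1,2,3\}$ as literally written the triple of the first three cells of a line is never formed as a deck), so it is an issue with the protocol description rather than a gap specific to your argument.
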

\begin{proof}
We show that if $P$ is able to perform the proof of a solution of $G$
for any challenge ($c$ and the sub-challenges $d$, $e$, etc. depending on $c$), 
then he knows a solution to the Takuzu
grid. During the setup phase, $P$ commits:
\begin{compactitem}
\item An envelope $E$ containing two permutations $\pi_C$ and $\pi_R$.
\item A grid of face down cards  $\tilde{S'}$. 
\end{compactitem}
We set $S= \pi_C^{-1} (\pi_R^{-1}(S'))$.  Since $P$ is able to perform the
proof for any challenge%
, we observe that:
\begin{compactitem}
\item Non empty cells of $G'=\pi_R ( \pi_C(G))$ are equal to corresponding cells
  of $S'$. Then \textbf{non empty cells of $G$ are equal to corresponding cells
  of $S$}. 
\item Rows and columns of $S'$ have the same number of $0$s and
  $1$s, and each row and each column of $S'$ is unique. 
  Then, \textbf{rows and columns of $S$ have the same number of
  $0$s and $1$s, and each row and each column of $S$ is unique}.
\item \textbf{Three consecutive cells of $S$ do not contain the same value}. 
\end{compactitem}
We deduce that $S$ is a solution of $G$. Hence, if $P$ does not
provide a solution of $G$, then he fails the proof for at least one of
the challenges.
To compute the probability, we enumerate the possible challenges for all values of $c$: 
\begin{compactdesc}
\item[If $c=0$:] In this case there is only one possible challenge.
\item[If $c=1$:] There are two possibilities, verifying the rows, or verifying
  the columns.
\item[If $c=2$:] There are $2n$ choices for the verifier, $n$ rows and $n$ columns.
\item[If $c=3$:] There are two possible values for the challenge $d$ and three possible values for the challenge $e$, then, there is $2\times 3=6$ combinations for the pair $(d,e)$.
\end{compactdesc}
Overall, $P$ receives a challenge out of $1+2+2n+6=2n+9$ possibilities. 
We suppose that the verifier selects any one of these challenges uniformly at
random.
If the prover gives a wrong grid, then at least one of the checks will fail, and
this check will have been selected with probability $\frac{1}{2n+9}$.
As the protocol is repeated $\seck$ times, the probability that $P$ convinces
$V$ without the solution is at least $\left( 1-\frac{1}{2n+9} \right)^\seck$.
\end{proof}

\begin{lemma}[Takuzu Zero-Knowledge]
During the verification phase, $V$ learns nothing about $P$'s solution for a given Takuzu grid.
\end{lemma}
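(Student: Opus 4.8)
Let me think about how to prove this. The standard technique, following Gradwohl et al. and already used for the Akari zero-knowledge lemma, is to construct an efficient simulator that reproduces the verifier's view without knowing a solution. The simulator is given the same power as before: when the shuffle functionality acts, the simulator may secretly swap the deck of cards for a different one of its choosing, and it may also choose what a committed envelope reveals (since it controls what goes into the envelope during a simulated setup). The goal is to show that for each of the four challenges $c\in\{0,1,2,3\}$ — and for the sub-challenges $d,e$ — the distribution of everything the (possibly cheating) verifier sees is identical whether it is interacting with the real prover or with the simulator.

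So the plan is: first, describe the simulator's setup. It picks random permutations $\pi_R,\pi_C$ just as the honest prover, writes them in $E$, but it does not have a real solution, so it fills the second grid $\tilde{S'}$ with arbitrary face-down cards (e.g. all $0$'s). Since the challenge $c$ is revealed only after setup, the simulator, once it sees $c$, rewinds or chooses its shuffle-swaps accordingly. Then I would treat the four cases. For $c=0$: the verifier opens $E$, computes $G'$, and reveals the cards of $\tilde{S'}$ at the non-empty positions of $G'$; the simulator must have placed exactly the values dictated by $G'$ at those positions — since it controls the setup and $c=0$ only inspects those fixed positions, it simply pre-places those cards correctly and arbitrary cards elsewhere, giving an identical view. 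For $c=1$: the verifier collects a whole row (or column), it is shuffled, then opened; the simulator, acting as the shuffle functionality, swaps in a deck with exactly $n/2$ ones and $n/2$ zeros, so the opened multiset is exactly what the verifier expects and the shuffle destroys positional information. For $c=2$: the verifier opens one chosen row (or column) and, for each other row, collects the cards lying under the $0$-positions of the revealed row, these are shuffled, returned to the prover, who reveals a $1$; the simulator, as shuffle functionality, replaces each such collected set with a deck that is all $0$'s except one $1$ (matching the number of $0$'s in the revealed row), then as the prover reveals that single $1$ — the view is identical, and note the revealed row itself can be simulated as any balanced $0/1$ string since the verifier has no prior information about it beyond the constraints already checked. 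For $c=3$: $E$ is opened, $V$ applies $\pi_C^{-1}\pi_R^{-1}$ to $\tilde{S'}$ (face down) to get $\tilde S$, then picks triples of three consecutive cards in each row or column; the simulator, as the shuffle functionality acting when the prover discards, swaps each triple for a deck of two distinct cards (one $0$, one $1$ in a random order), so what the verifier sees after $P$'s discard is exactly ``two different cards,'' as expected.

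The key observation that makes all cases go through is that in every branch, every card the verifier ultimately opens has passed through a shuffle (or is a card at a position whose value the verifier already knows from $G$ and the public permutations), so no correlation between a revealed value and its original grid position survives; and the committed envelope $E$ only ever contains the random permutations, which are distributed identically in the real and simulated runs. I would conclude by noting that since the simulator's output distribution is identical to the real transcript for each challenge, and the verifier's choice of challenge is independent of the setup, the whole interaction is zero-knowledge.

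The main obstacle I anticipate is the $c=2$ case: one has to argue carefully that revealing an entire row (or column) leaks nothing, i.e. that the simulator can produce a plausible revealed row without the solution. The point is that the only facts the verifier can cross-check about that revealed row — that it is balanced, that it differs from the others, that it has no three equal consecutive entries — are precisely the facts the simulator is free to bake into an arbitrary valid-looking row, and all the comparisons against the other rows go through shuffled decks whose contents the simulator controls. Making this independence argument precise, and checking that the counts (number of $0$'s in the revealed row, hence number of collected cards per other row) are consistent between what the verifier computes and what the simulator swaps in, is the delicate bookkeeping step; the other three cases are essentially immediate once the Akari-style simulator framework is in place.
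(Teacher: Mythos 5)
Your overall framework --- a per-challenge simulator in the style of the Akari proof, with the swap-during-shuffle power --- is exactly the paper's approach, and your treatments of $c=0$, $c=1$ and $c=2$ match the paper's proof (including the delicate point you flag for $c=2$: the revealed row is simulated as a uniformly random balanced string, which is the correct distribution because the hidden random column permutation applied to any fixed balanced row yields a uniform balanced string, and each shuffled deck is replaced by one containing a single $1$ of which only that $1$ is ever opened).

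The gap is in your $c=3$ case. The $c=3$ branch of the protocol contains \emph{no} invocation of the shuffle functionality: the envelope $E$ is opened, the verifier himself un-permutes the face-down grid, forms the triples of consecutive cards, and hands them directly to the prover, who discards one card and reveals the other two. So the simulator's licence to ``swap packets when shuffling'' is simply not available at the point where you use it; and with your initial all-zero commitment the simulator would be holding triples $000$ from which it cannot honestly produce two different cards (substituting cards while acting as the prover is precisely the kind of tampering the model excludes --- it is why the shuffle is a separate trusted functionality). The paper's fix, which your ``rewind'' remark gestures at but does not carry out, is to make the commitment itself challenge-appropriate: for $c=3$ the simulator commits the permuted version of a grid chosen at random subject only to the constraint that no three consecutive cells of any row or column are equal; then every triple handed to the prover contains exactly two equal cards, the discard-and-reveal step runs honestly, and since $\pi_R,\pi_C$ are fresh random permutations the opened transcript is distributed as in a real execution. (This is legitimate because the setup's visible transcript --- face-down cards plus a sealed envelope --- is independent of the committed values, so a per-challenge simulator suffices; but you must actually re-choose the committed grid for $c=3$, not appeal to a shuffle that never happens.)
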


\begin{proof}
Similar to the proof for Akari, we show how to construct a simulator for each challenge $c\in\{0,1,2,3\}$:
\begin{compactdesc}
\item[$c=0$:] The simulator completes the grid $G$ with random bits to obtain the grid $S$, and randomly chooses the two permutations $\pi_R$ and $\pi_C$. It then puts $\pi_R$ and $\pi_C$ in an envelope $E$ and  it commits $S'=\pi_R ( \pi_C(S))$ using cards face down (we denote this commitment $\tilde{S'}$). Then it simulates the prover and the verifier as follows: the prover commits $(E,\tilde{S'})$. Then the verifier can open $E$ and use $\pi_R$ and $\pi_C$ to compute $G'=\pi_R ( \pi_C(G))$.
 $V$ can then return those cards in $\tilde{S'}$ that are in places of initial values of the grid in $G'$ and check that all returned cards are equal to the initial values given in $G'$. Since $\pi_R$ and $\pi_C$ are randomly chosen, and since the values of cards in $\tilde{S'}$ that are in places of non empty cells $G'=\pi_R ( \pi_C(G))$  are equal to the corresponding  values of the grid in $G'$ (the other cards of $\tilde{S}'$ remain face down), the simulated proofs and real proofs are indistinguishable. 
 \item[$c=1$:] 
     When the verifier shuffles the decks of cards taken in a single row or column 
     using the shuffle functionality, the simulator returns $(n/2)$ cards with 1's 
     and $(n/2)$ cards with 0's, shuffled. 
     This is indistinguishable from a shuffled deck with the same number of 1's and 0's.
 \item[$c=2$:] 
     When the verifier selects a row or column, the simulator randomly chooses
     to position $(n/2)$ 1's and $(n/2)$ 0's. Given the random permutations
     $\pi_R$ and $\pi_C$, the permuted chosen row is indistinguishable from a
     randomly chosen one.
     Then, when the verifier shuffles the deck corresponding to selected cards
     in the other rows (or columns), the simulator places one card with a 1, and
     $(n/2-1)$ randomly selected other cards. Once again this is
     indistinguishable from the given decks, as the verifier only sees one card
     with a 1.
 \item[$c=3$:] The simulator chooses randomly $S$ such that three
   cells of $S$ never contain the same bit. It randomly chooses the
   two permutations $\pi_R$ and $\pi_C$ and puts $\pi_R$ and $\pi_C$
   in an envelope $E$. It commits  $S'=\pi_R ( \pi_C(S))$  using cards
   face down (we denote this commitment $\tilde{S'}$). Then it
   simulates the interaction between the prover and the verifier as follows: 
   the prover commits $(E,\tilde{S'})$.
   The verifier opens $E$ and uses
   $\pi_R$ and $\pi_C$ to compute $\tilde{S}= \pi_C^{-1}
   (\pi_R^{-1}(\tilde{S'}))$, where $\tilde{S}$ is the
   commitment of $S$ using face down cards. The verifier then randomly
   chooses $d$ and $e$ and collects the cards according to the
   verification algorithm. 
$V$ then gives each deck of three cards to the prover. 
For each deck, the prover thus obtains three cards
such that (exactly) two cards are identical. He discards one of these
two cards and returns the two different cards.   
Since $\pi_R$ and $\pi_C$ are randomly chosen, then simulated proofs
and real proofs are indistinguishable.  
\vspace{-\baselineskip}
\end{compactdesc}
\end{proof}
Therefore, our protocol for Takuzu is complete, sound and zero-knowledge.

\section{Kakuro}\label{sec:kakuro}

\emph{Kakuro} can been seen as a numerical version of
crosswords. According to~\cite{shortz06}, it was proposed for the
first time in the 1950's as a logic puzzle in ``Official Crossword
Puzzles'' by Dell Publishing Company. It is also known by its the original
English name \emph{Cross Sums}.

A Kakuro grid contains square and triangular white cells, and sometimes also
black cells.  The goal is to fill in the white cells on the grid with digits
from 1 to 9, such that the sum of each line and each column corresponds to the
total given in each triangular cell.  Moreover, in each line and in each column
a number can appear only once.  An example of the game with its the solution is
given in Figure~\ref{fig:kakuro}. This game has been proven NP-complete
in~\cite{journals/icga/KendallPS08,KakuroNP}.

\begin{figure}[bt]
\begin{center}
\begin{tikzpicture}
 \draw[fill=white] (0,0) rectangle (2,2);

 \draw[fill=white] (0,2) -- (0,3) -- (1,2) -- (1,3) -- (2,2) -- (0,2);
 \draw[fill=white] (0,0) -- (-1,1) -- (0,1) -- (-1,2) -- (0,2) -- (0,0);

 \draw[] (0,1) -- (2,1);
 \draw[] (1,0) -- (1,2);

 \draw (-0.25,0.66) node {4};
 \draw (-0.25,1.66) node {3};
 \draw (0.36,2.33) node {4};
 \draw (1.36,2.33) node {3};
\end{tikzpicture}
\qquad
\begin{tikzpicture}
 \draw[fill=white] (0,0) rectangle (2,2);

 \draw[fill=white] (0,2) -- (0,3) -- (1,2) -- (1,3) -- (2,2) -- (0,2);
 \draw[fill=white] (0,0) -- (-1,1) -- (0,1) -- (-1,2) -- (0,2) -- (0,0);

 \draw[] (0,1) -- (2,1);
 \draw[] (1,0) -- (1,2);

 \draw (-0.25,0.66) node {4};
 \draw (-0.25,1.66) node {3};
 \draw (0.36,2.33) node {4};
 \draw (1.36,2.33) node {3};

 \draw (0.5,0.5) node {\Large 3};
 \draw (0.5,1.5) node {\Large 1};
 \draw (1.5,0.5) node {\Large 1};
 \draw (1.5,1.5) node {\Large 2};
\end{tikzpicture}
\end{center}
\caption{Simple example of a Kakuro grid and its solution.}\label{fig:kakuro}
\end{figure}
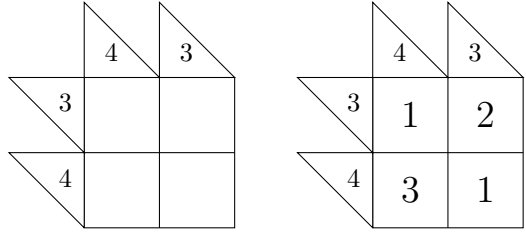

\subsection{ZKP Construction for Kakuro}

In this game, the main challenge is to be able to verify that a sum 
of some numbers is correct without revealing any other information.%

\subparagraph*{Setup:}
In this construction, we use black and red playing cards.
To represent a number $l$, for instance $3$, we put $9$ cards into an envelope: $l=3$ black cards and $9-l=6$ red cards.
Using this trick, we can construct a ZKP as follows:
\begin{compactenum}
\item In each cell, we place 4 identical envelopes containing some
  cards that correspond to the number contained in the cell of the
  solution. The number is encoded using some cards as explained above.
\item Next to each triangular cell, we place envelopes containing cards for all
  missing numbers in this line or column. For instance, for the first line of
  Figure~\ref{fig:kakuro}, we place $7$ envelopes containing black and red cards
  corresponding to the numbers $3$, $4$, $5$, $6$, $7$, $8$, $9$.
\end{compactenum}

\subparagraph*{Verification:}
We proceed as follows:
\begin{compactitem}
\item 
To verify that a number appears only once per line and per
column, the verifier proceeds as follows: For each line and each column,
he randomly picks an envelope per cell plus all the envelopes next to the
triangular cell. The packet of envelopes is shuffled by the shuffle 
functionality, and then the verifier opens
all the envelopes and verifies that all numbers appear only once, and
that all numbers between $1$ and $9$ are present.
\item 
To verify that the sum per line and per column corresponds to the
number in the triangular cell, the verifier randomly picks one envelope per
cell in the line (resp. in the column) and opens (face down) the content of each
envelope. All the cards are shuffled by the shuffle functionality and then the
verifier returns them. He can then 
check that the number of black cards corresponds to the number given
in the triangular cell.
\end{compactitem}
This protocol is repeated $\seck$ times where $\seck$ is a chosen
security parameter.  It is easy to see that the ZKP is polynomial in
the size of the grid as we only need a polynomial number of cards and
verification steps.

 \subsection{Security Proofs for Kakuro}\label{app:kakuro}

We prove the security of our ZKP protocol for Kakuro.
\begin{lemma}[Kakuro Completeness]\label{lem:kakuro:complete}
If $P$ knows a solution of a given Kakuro grid, then he is able to convince $V$.
\end{lemma}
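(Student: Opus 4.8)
The plan is to show that an honest prover, knowing a valid solution $S$ of the Kakuro grid $G$, can run the setup phase and then successfully answer every verification check. First I would fix notation: for each white cell the prover places four identical envelopes, each encoding the digit of $S$ in that cell via $l$ black and $9-l$ red cards; and next to each triangular (clue) cell the prover places envelopes for exactly the digits \emph{not} used in the corresponding line or column of $S$. Since $S$ is a valid solution, each line (resp.\ column) uses a set of distinct digits from $\{1,\dots,9\}$, so the ``missing'' digits are well defined and their envelopes can indeed be prepared; moreover the used digits together with the missing digits form exactly $\{1,\dots,9\}$.

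Next I would treat the two verification steps in turn. For the ``each digit appears once per line/column'' check: the verifier picks one envelope per white cell in the line/column plus all the envelopes next to the triangular cell; by construction these envelopes contain precisely the digits used in that line/column (each once, since the solution is valid) together with the missing digits (each once), hence after the shuffle the verifier sees each of $1,\dots,9$ exactly once, and the check passes. The shuffle functionality only permutes the envelopes, so it does not affect which digits are present. For the sum check: the verifier picks one envelope per white cell in the line/column and pours out (face down) the $9$ cards from each; the total number of black cards is $\sum l_i$ over the cells of that line/column, which equals the clue in the triangular cell because $S$ is a valid solution; the shuffle functionality merely permutes the cards, so the count of black cards is unchanged, and the verifier's count matches the clue.

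Finally I would note that since these are the only checks the verifier performs, and each succeeds for every line and every column, the prover convinces $V$ regardless of which check is selected, and this holds in each of the $\seck$ repetitions. I do not expect a genuine obstacle here: completeness proofs are routine once the encoding is set up correctly. The only point needing a little care is to state explicitly that the shuffle functionality is a permutation of envelopes (resp.\ cards) and therefore preserves both the multiset of encoded digits and the number of black cards, so that the honest prover's commitments pass the checks exactly.
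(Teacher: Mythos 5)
Your proposal is correct and follows essentially the same argument as the paper's own (much terser) proof: the honest setup makes the used and missing digits together cover $\{1,\dots,9\}$ exactly once for the unicity check, and the black-card count equals the clue for the sum check, with the shuffle merely permuting envelopes or cards. The extra explicitness about the shuffle preserving multisets is a fine addition but not a different approach.
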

\begin{proof}
After a correct setup each cells contains identical envelopes.
\begin{compactitem}
\item The unicity of numbers per row or column is given by the fact
  that all $n$ numbers are present exactly once between the envelopes
  within the cell and next to the triangle of the row/column.
\item The correctness of the sum is given by the fact that when mixed
  the number of black cards is exactly the value within the triangle.
  \end{compactitem}
\vspace{-\baselineskip}
\end{proof}

\begin{lemma}[Kakuro Soundness]\label{lem:kakuro:sound}
If $P$ does not provide a solution of a given Kakuro grid,
then he is not
able to convince $V$ except with a negligible probability lower than $p=(1/4)^\seck$ 
when the protocol is repeated $\seck$ times.
\end{lemma}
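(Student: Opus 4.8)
### Proof Proposal

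The plan is to follow the same contrapositive strategy used in the Akari and Takuzu soundness lemmas: show that if $P$ can answer \emph{every} challenge, then the committed envelopes actually encode a solution of the grid. First I would set up notation for what $P$ commits: in each white cell a packet of $4$ identical envelopes (to be checked below that they are in fact identical), and next to each triangular cell the auxiliary envelopes claimed to encode the missing numbers. The key observation, as in Kakuro completeness (Lemma~\ref{lem:kakuro:complete}), is that an envelope encoding $l$ contains exactly $l$ black cards among $9$, so the black-card count of a shuffled union of envelopes equals the sum of the encoded numbers, and this is the only information leaked.

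Next I would argue that passing the two verification steps forces correctness. For the unicity check: for each line (resp.\ column), opening one envelope per cell together with all the auxiliary envelopes and seeing each of $1,\dots,9$ exactly once forces the cell-envelopes of that line to carry distinct values in $\{1,\dots,9\}$ — hence no repetition per line or column. For the sum check: opening one envelope per cell of a line, shuffling all their cards, and counting black cards forces the sum of the encoded cell-values to equal the triangular target. The subtle point — and the analogue of the ``well-formed commitment'' step in Akari — is that each cell holds $4$ envelopes, and the verifier picks a \emph{random} one each time; so I must argue that if the $4$ envelopes in some cell are not all identical (or do not encode a legal digit $1$–$9$ at all), the verifier catches it with probability bounded below. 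This is where I expect the main obstacle: unlike the Akari $c=0$ branch, there is no dedicated ``commitment is well-formed'' challenge here, so the bound $1/4$ must come from the chance of sampling the bad envelope within one of the checks. I would pin down exactly which check exposes a malformed cell (a non-digit envelope shows up as a wrong card-count in either the unicity opening or the sum opening) and count: with $4$ envelopes per cell, a single malformed one is selected with probability $\ge 1/4$ in the relevant check, giving the stated per-round error of at most $1/4$ and hence $p=(1/4)^{\seck}$ after $\seck$ independent repetitions.

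Finally I would assemble the pieces: if all three committed-data conditions (digits $1$–$9$, distinct per line/column, correct line/column sums) hold, then reading off the encoded digit in each cell yields a grid satisfying every Kakuro constraint, so $P$ knows a solution; contrapositively, if $P$ has no solution, at least one check must reveal an inconsistency, and that check fires — or the offending random envelope is drawn — with probability at least $1/4$ in each round, so after $\seck$ rounds the cheating probability is at most $(1/4)^{\seck}$. I would close by noting, as the authors do for Akari, that one could push the soundness error to $0$ by strengthening the model (e.g.\ a $k$-plicate-style functionality guaranteeing the $4$ envelopes in a cell are genuinely identical), but with the plain shuffle functionality the $1/4$ bound is what the argument gives.
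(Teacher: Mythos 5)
Your overall contrapositive structure matches the paper's, but the quantitative step is wrong, and it is exactly the step you yourself flagged as the main obstacle. You argue that a malformed envelope in a cell is \emph{selected} with probability at least $1/4$ in the relevant check, and from this you conclude a per-round cheating probability of at most $1/4$. These two statements do not connect: if the verifier \emph{catches} the cheat with probability at least $1/4$, then the prover \emph{escapes} with probability at most $3/4$, so your argument only yields a soundness error of $(3/4)^\seck$, not the claimed $(1/4)^\seck$ (still negligible, but not the stated bound).

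The paper obtains the stronger bound by a different counting, following the Sudoku argument of Gradwohl \emph{et al.} Each cell is involved in exactly four rules (row unicity, column unicity, row sum, column sum), and the four envelopes of the cell are randomly distributed among these four rules. If the prover places four identical envelopes in every cell but has no solution, some rule is necessarily violated by the committed values and he is caught with probability $1$; hence the only way to cheat is to place, in some cell $a$, envelopes that are not all identical, so that at least one envelope carries a value $y$ differing from at least two of the other three. For \emph{all} checks to pass, the random assignment must send the $y$-envelope precisely to the (at most one) rule that requires $y$ in that cell, which happens with probability $1/4$; if two envelopes carry $y$ and two rules require it, the probability is $\frac{1}{2}\cdot\frac{1}{3}=\frac{1}{6}<\frac{1}{4}$. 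So the prover's \emph{success} probability per round is at most $1/4$, and this is what gives $(1/4)^\seck$ after $\seck$ repetitions. To repair your proof you need this ``the assignment must match the intended deception'' argument, i.e.\ an upper bound on the escape probability, rather than a lower bound on the detection probability.
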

\begin{proof}
The proof follows the same line as
in~\cite[Lemma~1]{Gradwohl:2007:CPZ:1760607.1760623}.  Each rule is
separately verified (unicity in row/column, correct sum in
row/column).

Assume that the prover does not know a valid solution for the
puzzle. Then he is always caught by the protocol as a liar if he
places the cards such that each cell has four cards of identical
value. 
The only way a prover can cheat is by placing on a cell, say
cell $a$, four envelopes that do not all contain the same value.
This means that in this cell at least one envelope contains a value $y$ different
from at least $2$ of the other $3$ numbers.
Given any assignment of envelopes to the unicities and sums for all 
other cells, there is either only one envelope with value $y$ in the cell and 
thus for the (cheating) prover exactly one of the unicities/sums rules
that requires $y$ in this cell, or there are two envelopes with value $y$ in the cell and
exactly two of the unicities/sums rules require $y$ in this cell.
In the first case, the probability that for cell $a$ the verifier
assigns $y$ to the one rule needing it is $1/4$.
When there are two envelopes encoding $y$ in the cell, the probability to assign $y$ to
the first rule needing it is $1/2$ and then the probability to assign
the second $y$ to the second rule needing it is $1/3$, overall this
is~$1/6<1/4$. 
As the protocol is repeated $\seck$ times, the probability that $P$ convinces $V$ without the solution is bounded by $(1/4)^\seck$, which is negligible.
\end{proof}

\begin{lemma}[Kakuro Zero-Knowledge]\label{lem:kakuro:zk}
$V$ learns nothing about $P$'s solution of a given Kakuro grid.
\end{lemma}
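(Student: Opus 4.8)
The plan is to construct an efficient simulator that, given only the grid $G$ (but not the solution), produces a transcript of the verification phase that is indistinguishable from a real interaction between an honest prover and a (possibly cheating) verifier. As in the proofs for Akari and Takuzu, the simulator is granted the extra power of the shuffle functionality: whenever the verifier passes a packet to be shuffled, the simulator may replace that packet by a packet of its own choosing, as long as the replacement has the same back-appearance (all cards/envelopes face down and indistinguishable before being opened). I would organise the argument around the two verification steps of the Kakuro protocol.

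First I would handle the unicity check. When the verifier selects one envelope per cell of a row (or column) together with the envelopes next to the triangular cell and hands the whole packet to the shuffle functionality, the simulator replaces this packet by a freshly prepared packet of envelopes encoding exactly the numbers $1,2,\dots,9$, each once, in a uniformly random order. When the verifier opens these envelopes he sees every value between $1$ and $9$ exactly once, which is precisely what he would see in a real execution; since in the real protocol the shuffle randomises the order, the distributions coincide. Second, for the sum check, when the verifier picks one envelope per cell of a line (resp. column), opens them face down and passes all $9k$ cards (where $k$ is the number of white cells in that line) to the shuffle functionality, the simulator replaces the resulting pile by a pile containing exactly $t$ black cards and $9k-t$ red cards in random order, where $t$ is the target value written in the triangular cell for that line. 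Again the verifier, who only checks the count of black cards, cannot distinguish this from the real shuffled pile.

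The remaining point is to argue global indistinguishability: every envelope and every card the verifier ever opens is opened only after a shuffle through the functionality, at which moment the simulator has already substituted a packet with the correct observable content, and all unopened envelopes/cards stay face down throughout. Hence the joint distribution of everything the verifier sees in the simulation matches the real one, regardless of the verifier's (adaptive) choices of which line, column or envelope to test. I would conclude that the simulator runs in polynomial time (it prepares a polynomial number of packets of nine cards) and perfectly reproduces the verifier's view, so the protocol is zero-knowledge.

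The main obstacle, and the place where care is needed, is making precise that the simulator's substitutions are \emph{consistent} across the two types of checks even though a real prover's commitment is a single fixed configuration: in the real protocol the envelope opened in a unicity check and the cards revealed in a sum check are correlated through the underlying solution, whereas the simulator generates them independently. The resolution is that the verifier never performs both kinds of opening on the same physical envelope within one repetition without an intervening shuffle, and across repetitions the prover's commitment is reshuffled/repicked as well, so no cross-check correlation is ever observable; spelling this out cleanly is the crux of the proof.
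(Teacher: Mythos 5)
Your proposal is correct and follows essentially the same route as the paper: a simulator that exploits its power over the shuffle functionality to swap the unicity packets for a shuffled set of envelopes containing each value $1,\dots,9$ once, and the sum packets for a shuffled pile with exactly the target number of black cards, so the verifier's view depends only on the public grid. Your closing discussion of cross-check consistency is a sound elaboration of the same point the paper settles by noting that every packet is only inspected after a shuffle, so the approaches coincide.
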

\begin{proof}
As in the proof for Akari (Lemma~\ref{lem:akari:zk}) we use the
advantage of the simulator over the prover: when shuffling packets (of selected
envelopes to show unicity; or collected cards for sums) it is allowed to
swap the packets for different ones. The simulator acts as follows:
\begin{compactitem}
\item The simulator places four arbitrary envelopes on each cell.
\item The verifier randomly picks the envelopes for the
  corresponding packets.
\item Then there are two cases:
\begin{compactitem}
\item When verifying the unicity of a number, the envelopes in a packet are shuffled. The
  simulator swaps the packets with a randomly shuffled packet of
  envelopes, in which each value appears once.
\item When verifying the sums, the envelopes are opened and mixed by the verifier. 
 Then all the cards are shuffled. The simulator swaps the set of cards
 with a randomly shuffled packet with the same number of cards, in
 which the number of black cards corresponds to the sum.
 \end{compactitem}
\end{compactitem}
The final packets are indistinguishable from those provided by an
honest prover assuming that the shuffle functionality
guarantees that the packets each contain randomly shuffled sets.
\end{proof}

\section{Ken-Ken}\label{sec:kenken}

\emph{KenKen} is a Japanese game invented by Tetsuya Miyamoto, also
known as \emph{Calcudoku}, \emph{Mathdoku} or \emph{Kendoku}.  It
combines ideas from Sudoku and Kakuro. A KenKen grid is a square grid
of size $n \times n$. To solve a KenKen grid, like in Sudoku, each row
and each column must contain exactly once all numbers between $1$ and
$n$. Moreover, a KenKen grid is divided in groups of cells called
\emph{cages}. The example given in Figure~\ref{fig:kenken} contains 3
cages: one vertical line of 3 cells, one horizontal line of 2 cells
and a square of 4 cells. Each cage contains a \emph{target} number
that has to be produced using the numbers in the cells (in any order)
and the mathematical operation (addition, subtraction, multiplication
or division) given after the target number. For example in
Figure~\ref{fig:kenken}, the vertical three-cell cage with the
addition operator and a target number of 6 is satisfied with the
digits $1$, $2$, and $3$, as $3+1+2=6$. The target number and the operator
are given in the upper left corner of a cage.  The target must be a
positive integer. KenKen is known to be
NP-complete~\cite{Koelker12,DBLP:journals/jip/HaraguchiO15,DBLP:journals/ieicet/HaraguchiO13}.

In most KenKen grids, division and subtraction operators are restricted
to cages of only two cells.  For instance in the grid given in
Figure~\ref{fig:kenken}, the cage with the subtraction operator and
the target of 1 has four possible solutions when analyzed in
isolation: $(1,2)$, $(2,1)$, $(2,3)$ or $(3,2)$. In general grids do not need
to respect this hypothesis, and there are puzzles that use more than
two cells for these operations.  In any case, for each subtraction or
division cage, there is at least one maximal element of which the
other elements in the cage are subtracted or divided.

\begin{figure}[bt]
\begin{center}
\begin{tikzpicture}
 \draw[line width=2pt] (0,0) rectangle (3,3);

 \draw[] (2,0) -- (2,3);
 \draw[] (0,1) -- (3,1);
 \draw[] (0,2) -- (3,2);

 \draw[line width=2pt] (1,0) -- (1,3);
 \draw[line width=2pt] (1,2) -- (3,2);
 
 \draw (0.3,2.75) node {\tiny $+$ 6};
 \draw (1.3,2.75) node {\tiny $-$ 1};
 \draw (1.3,1.75) node {\tiny $*$ 18};
\end{tikzpicture}
\qquad
\begin{tikzpicture}
 \draw[line width=2pt] (0,0) rectangle (3,3);

 \draw[] (2,0) -- (2,3);
 \draw[] (0,1) -- (3,1);
 \draw[] (0,2) -- (3,2);

 \draw[line width=2pt] (1,0) -- (1,3);
 \draw[line width=2pt] (1,2) -- (3,2);
 
 \draw (0.3,2.75) node {\tiny $+$ 6};
 \draw (1.3,2.75) node {\tiny $-$ 1};
 \draw (1.3,1.75) node {\tiny $*$ 18};

 \draw (0.6,2.4) node {\Large 3};
 \draw (1.6,2.4) node {\Large 1};
 \draw (2.6,2.4) node {\Large 2};
 \draw (0.6,1.4) node {\Large 1};
 \draw (1.6,1.4) node {\Large 2};
 \draw (2.6,1.4) node {\Large 3};
 \draw (0.6,0.4) node {\Large 2};
 \draw (1.6,0.4) node {\Large 3};
 \draw (2.6,0.4) node {\Large 1};
\end{tikzpicture}
\end{center}
\caption{Simple example of a KenKen grid and one possible solution.}\label{fig:kenken}
\end{figure}

\subsection{ZKP Construction for KenKen}
We use the same idea as in
Sudoku~\cite{Gradwohl:2007:CPZ:1760607.1760623} to verify that all
numbers appear only once per row and column.  We also use the same
representation of numbers as in Kakuro, and the same technique to
verify the addition cages.  For multiplications, the idea is to check
the sum of the exponents of the prime factors of the target.  Indeed,
in a $n\times{}n$ grid, all prime factors are below $n$ and there are
at most ${\mathcal O}(n/\log(n))$ of those~(see, \emph{e.g.},
\cite{Rosser:1962:formulas}), so we need at most that number of
parallel exponent addition protocols.  There we need to encode each
integer by its prime factor exponents: for this we use small envelopes
marked with the prime factor $p$, called $p$-envelopes, and containing
the black/red cards encodings for the associated exponent.
The
maximum possible exponents have to be found among the factors of the
integers between $1$ and $n$.  
For instance, with $n=9$, all the integers are encoded with the primes
$2$, $3$, $5$, $7$, with respective exponents between $0$ and $3$,
between $0$ and $2$, and between $0$ and $1$ for both $5 $ and $7$. 
Therefore the $2$-envelopes  contain exactly $3$ cards, the
$3$-envelopes $2$ cards and the $5$- and $7$-envelopes, $1$ card. 

To deal with subtractions and
divisions, we need an extra interactive round to identify the largest
integer in a cage, and then we reduce to either an addition or a
multiplication verification.  To remain zero-knowledge even after the
identification of the maximal element, the solution of the cage is
mixed with other solutions, with all the other possible maximal
elements, before the verifier checks them.  Finally, to be able to
deal with both additions/subtractions and multiplications/divisions,
we  need larger envelopes containing both kind of encodings per
cell. The encodings must match.  For instance, a $6$ in a grid of size
$9$ is  encoded by a large envelope containing: $6$ black cards and $3$
red cards (just like for Kakuro); but also a small $2$-envelope marked
with a $2$ and containing $1$ black card and $2$ red cards; similarly,
a small $3$-envelope with $1$ black card and $1$ red card; a small
$5$-envelope with $1$ red card and a small $7$-envelope with $1$ red
card.  This works also for the value $1$, encoded with $p$-envelopes
containing only red cards.%

\subparagraph*{Setup:}
Our ZKP scheme works as follows for a grid of size~$n$:
\begin{compactitem}
\item In each cell, we place three identical envelopes encoding the
  number of the solution, in both encodings.
\item For each subtraction cage with~$c\geq{}2$ cells of target~$t$,
  let $max$ be the maximal value in the $c$ cells and $c_i\geq{}1$,
  $i=1,\ldots,c-1$, the other values in any order. Then
  $t=max-\sum_{i=1}^{c-1} c_i$ and $n\geq{}max=t+\sum_{i=1}^{c-1} c_i\geq{}t+(c-1)$.
  We thus place at most~$(n-t-c+1)$ large envelopes next to
  the grid. Each of these envelopes contains: one marked small envelope 
  (itself containing~$n$ cards) and~$n(c-1)$ other
  black or red cards. The number of black cards in the small
  envelope minus the number of other black cards must match the
  target. These~$(n-t-c+1)$ large envelopes contain all combinations
  (corresponding to distinct maximal elements) except the one from the solution.
\item 
  Each division cage with~$c$ cells of target~$t$ must contain a
  maximal element divisible by $t$ and by the $c-1$ other elements.
  This maximal element must be less or equal than $n$ and larger or
  equal than $t$.  We denote by $u$ the number of possible maximal
  elements, $u=|\{m, n\geq{}m\geq{}t~\text{and}~m/t\in\N\}|\leq(n-t+1)$.
  Then we place~$u-1$ large envelopes next to the grid.  Each of these
  large envelopes contains: a full set of~$p$-envelopes and another
  envelope, marked, itself containing a full set of~$p$-envelopes, for
  all primes $p\leq{}n$.  For each prime~$p$, the number of black
  cards in the~$p$-envelopes of the marked envelope minus the number
  of black cards in the other~$p$-envelope equals the exponent of~$p$
  within the factorization of the target~$t$. For instance if the
  target is~$4$ and the~$2$-envelope within the marked envelope
  contains~$3$ black cards, then the other~$2$-envelope in the
  large envelope must contain exactly~$1$ black card.  A
  complete example for a division cage is given below.
\end{compactitem}
\subparagraph*{Verification:}
Once all placements are done, the verifier
randomly picks envelopes placed on each cell and perform the
following verifications:
\begin{compactitem}
\item To verify that each number appears only once per row and
  column, the verifier randomly picks for each line and for each column one
  envelope per cell. This packet is shuffled by the shuffle functionality, and then
  the verifier  opens all the envelopes and check that all
  numbers appear exactly once. Moreover, the verifier  checks that
  both encodings coincide, that is that the exponents within the
  $p$-envelopes coincide with the factorization of the number of black cards.
\item For each addition cage, the verifier randomly picks one envelope per
  cell. Then he opens all the envelopes and discards (or gives
  back to the prover) all its~$p$-envelopes without
  opening them. Finally, as in Kakuro, the verifier mixes the black
  and red card from all envelopes face down, 
  asks the shuffle functionality to shuffle them and then verifies that the
  number of black cards corresponds to the target number.
\item For each subtraction cage, there is an extra interactive round:
  \begin{compactenum}
  \item The verifier randomly picks one envelope per cell in the cage, and asks
    the shuffle functionality to shuffle them;
  \item The verifier gives to the prover these envelopes, one at a
    time, after discarding (or giving back to the prover);
  \item The prover looks inside, and has two possibilities:
    \begin{compactitem}
    \item If the envelope does not contain the maximum of the cage, the prover
      gives back the envelope unmodified to the verifier.
    \item Otherwise the prover marks the envelope (in view of the
      verifier, for instance with a pencil) and gives the marked
      envelope back to the verifier.
    \end{compactitem}
    If there are multiple copies of a maximum element in the cage, the
    prover randomly chooses which one to mark.
  \item The verifier empties all the envelopes, but the one
    marked by the prover, into a larger envelope (all these cards
    are shuffled using the functionality) and discard all the~$p$-envelopes.
  \item The verifier adds the marked envelope, still sealed, to the same
    larger envelope.
  \item The verifier asks the shuffle functionality to shuffle this
    larger envelope with the~$(n-t)$ other large envelopes associated to the
    subtraction cage. Then the verifier opens all large envelopes,
    checks that each large envelope satisfies the target (black cards
    in the marked envelope minus the free black cards in the large
    envelope equals the target) and checks that the~$n-t+1$ possible
    combinations are present exactly once. 
  \end{compactenum}
\item For each multiplication cage of target~$t$, the verifier
randomly picks one envelope per cell and opens them all.
He discards (or gives back to the prover) the free black and red
cards without returning them. Then, one prime~$p$ at a time, he empties
all the associated small~$p$-envelopes, mixes all the black and red
cards, asks the shuffle functionality to shuffle them and then
verifies that 
the number of black cards is the exponent of the prime factor~$p$ in
the factorization of~$t$.
\item 
  For each division cage, we mix the subtraction and
  multiplication protocols: as in the subtraction, the prover and the
  verifier enter an interactive extra round to mark an envelope
  containing a maximal element; then for each of the $u$ possible
  maximal elements, there is a set of possible
  multiplicative solutions.
  There is a complete example below. %
\end{compactitem}
This protocol is repeated $\seck$ times where $\seck$ is a chosen
security parameter.  The protocol can be verified in \emph{polynomial
  time}.  This stems from the fact that the prime factors are all
lower than $n$.  Therefore, even factoring the target numbers is just
looking at greatest common divisors between $t$ and values from $2$ to
$n$.

\subsection{Example of a division cage setup for KenKen}\label{app:ex:kenken}
To illustrate our construction, we use the division cage with $c=4$ cells given in Figure~\ref{fig:ex:kenken}.
\begin{figure}[tb]
\begin{center}
\begin{tikzpicture}
 \draw[line width=2pt] (0,0) rectangle (2,2);

 \draw[] (1,0) -- (1,2);
 \draw[] (0,1) -- (2,1);

 \draw (0.3,1.75) node {\tiny $\div$ 2};
\end{tikzpicture}
\qquad
\begin{tikzpicture}
 \draw[line width=2pt] (0,0) rectangle (2,2);

 \draw[] (1,0) -- (1,2);
 \draw[] (0,1) -- (2,1);

 \draw (0.3,1.75) node {\tiny $\div$ 2};

 \draw (0.6,1.4) node {\Large 1};
 \draw (1.6,1.4) node {\Large 6};
 \draw (0.6,0.4) node {\Large 3};
 \draw (1.6,0.4) node {\Large 1};
\end{tikzpicture}

\end{center}
\caption{A $2\times{}2$ division cage and one solution within a $9\times{}9$ KenKen
  grid.}\label{fig:ex:kenken}
\end{figure}
 Suppose the cage in the figure is part of an $9\times{}9$ grid and
 that the solution is the one given, that is $2 = 6 /3/1/1$.  As
 $9=3^2$ and $8=2^3$, the maximal exponents  for
 $2$, $3$, $5$ and $7$, will be bounded by $3$, $2$, $1$
 and $1$, respectively, and denoted by $e_2$, $e_3$, $e_5$ and $e_7$, respectively.
   There are $u=4$ possible maximal elements ($2$, $4$,
 $6$, and $8$) because $n=9$ and the target number is $2$.  Moreover, as
 this cage contains $c=4$ cells, the maximal elements are
 divided by $3$ numbers. For instance, not counting orders, the target
 can be obtained with the following solutions (one per possible
 maximum): $2=8/2/2/1$; $2=6/3/1/1$; $2=4/2/1/1$; $2=2/1/1/1$.

\subparagraph*{Setup:}
For each cell of the initial cage the prover places, according to his
solution, the following envelopes, where the number of cards contained
in a $p$-envelope is $e_p$:
\begin{enumerate}
\item For each one of the two $1$'s he places three identical envelopes containing each:
  \begin{itemize}
  \item To verify addition and subtraction: $1$ black
    card and $8$ red cards,
  \item To verify multiplication and division: a $2$-envelope with
    $e_2=3$ red cards, a $3$-envelope with $e_3=2$ red cards, a
    $5$-envelope with $e_5=1$ red card, a $7$-envelope with $e_7=1$ red
    cards;
  \end{itemize}
\item  For the $3$ he places three identical envelopes containing each:
  \begin{itemize}
  \item To verify addition and subtraction verification: $3$ black
    cards and $6$ red cards,
  \item To verify multiplication and division: 
  a $2$-envelope with $e_2=3$ red cards, 
  a $3$-envelope with $1$ black card and $e_3-1=1$ red card, 
  a $5$-envelope with $e_5=1$ red card, 
  a $7$-envelope with $e_7=1$ red card;
  \end{itemize}
\item For the $6$: three identical envelopes containing each:
  \begin{itemize}
  \item To verify addition and subtraction: $6$ black
    cards and $3$ red cards,
  \item To verify multiplication and division: 
  a $2$-envelope with $1$ black card and $e_2-1=2$ red cards, 
  a $3$-envelope with $1$ black card and $e_3-1=1$ red card, 
  a $5$-envelope with $e_5=1$ red card, 
  a $7$-envelope with $e_7=1$ red card; 
  \end{itemize}
\end{enumerate}
Furthermore, the prover prepares three large envelopes next to the grid:
\begin{enumerate}
\item One for the maximal element $8$, containing the encoding of $2^1
  \cdot 2^1 \cdot 1$:
  \begin{itemize}
    \item A $2$-envelope that contains $3 \cdot e_2=9$ cards including
      $2$ black and $ 3 \cdot e_2 -2 =7$ red cards.
\item A $3$-envelope that contains $3 \cdot e_3=6$ cards including $0$
  black card and $ 3 \cdot e_3 $ red cards.
\item A $5$-envelope that contains $3 \cdot e_5=3$ cards including $0$
  black card and $ 3 \cdot e_5 $ red cards.
\item A $7$-envelope that contains $3 \cdot e_7=3$ cards including $0$
  black card and $ 3 \cdot e_7 $ red cards.
    \item A marked envelope containing the encoding of $8=2^3\cdot{}3^{0}\cdot{}5^0\cdot{}7^0$:
      \begin{itemize}

 \item A $2$-envelope that contains $e_2=3$ cards including $3$ black
   cards and $0$ red card.
 \item A $3$-envelope that contains $e_3=2$ cards including $0$ black
   card and $2$ red cards.
    \item A $5$-envelope that contains $e_5=1$ cards including $0$
      black card and $1$ red card.
    \item A $7$-envelope that contains $e_7=1$ cards including $0$
      black card and $1$ red card.
      \end{itemize}
  \end{itemize}
\item One for the maximal element $4$, containing the encoding of $2^1
  \cdot 1 \cdot 1$:
  \begin{itemize}
    \item A $2$-envelope that contains $3 \cdot e_2=9$ cards including
      $1$ black card and $ 3 \cdot e_2 -1=8 $ red cards.
\item A $3$-envelope that contains $3 \cdot e_3=6$ cards including $0$
  black card and $ 3 \cdot e_3 $ red cards.
\item A $5$-envelope that contains $3 \cdot e_5=3$ cards including $0$
  black card and $ 3 \cdot e_5 $ red cards.
\item A $7$-envelope that contains $3 \cdot e_7=3$ cards including $0$
  black card and $ 3 \cdot e_7 $ red cards.
    \item A marked envelope containing the encoding of $4=2^2\cdot{}3^{0}\cdot{}5^0\cdot{}7^0$:
      \begin{itemize}

 \item A $2$-envelope that contains $e_2=3$ cards including $2$ black
   cards and $1$ red card.
 \item A $3$-envelope that contains $e_3=2$ cards including $0$ black
   card and $2$ red cards.
    \item A $5$-envelope that contains $e_5=1$ cards including $0$
      black card and $1$ red card.
    \item A $7$-envelope that contains $e_7=1$ cards including $0$
      black card and $1$ red card.
      \end{itemize}
    \end{itemize}\item One for the maximal element $2$, containing the encoding of $1
  \cdot 1 \cdot 1$:
  \begin{itemize}
    \item A $2$-envelope that contains $3 \cdot e_2=9$ cards including
      $0$ black card and $ 3 \cdot e_2  $ red cards.
\item A $3$-envelope that contains $3 \cdot e_3=6$ cards including $0$
  black card and $ 3 \cdot e_3 $ red cards.
\item A $5$-envelope that contains $3 \cdot e_5=3$ cards including $0$
  black card and $ 3 \cdot e_5 $ red cards.
\item A $7$-envelope that contains $3 \cdot e_7=3$ cards including $0$
  black card and $ 3 \cdot e_7 $ red cards.
    \item A marked envelope containing the encoding of $2=2^1\cdot{}3^{0}\cdot{}5^0\cdot{}7^0$:
      \begin{itemize}
 \item A $2$-envelope that contains $e_2=3$ cards including $1$ black
   card and $2$ red cards.
 \item A $3$-envelope that contains $e_3=2$ cards including $0$ black
   card and $2$ red cards.
    \item A $5$-envelope that contains $e_5=1$ cards including $0$
      black card and $1$ red card.
    \item A $7$-envelope that contains $e_7=1$ cards including $0$
      black card and $1$ red card.
      \end{itemize}
    \end{itemize}
  
\end{enumerate}

\subparagraph*{Verification:}
The cards that are intended for addition or subtraction are discarded.
The verifier and the prover start the round to mark the envelope of
the $6$.  Then the verifier merges the remaining $p$-envelopes.  Thus,
the large envelope  contains at the end:
\begin{itemize}
\item A $2$-envelope that contains $3 \cdot e_2=9$ cards including $0$
  black card and $9$ red cards, encoding the sum of
  exponents of $2$ for $3^1 \cdot 1 \cdot 1$;
\item A $3$-envelope that contains $3 \cdot e_3=6$ cards including $1$
  black card and $5$ red cards, encoding the sum of
  exponents of $3$ for $3^1 \cdot 1 \cdot 1$;
  \item A $5$-envelope that contains $3 \cdot e_5=3$ cards including $0$
  black card and $3$ red cards, encoding the sum of
  exponents of $5$ for $3^1 \cdot 1 \cdot 1$;
  \item A $7$-envelope that contains $3 \cdot e_7=3$ cards including $0$
  black card and $3$ red cards, encoding the sum of
  exponents of $7$ for $3^1 \cdot 1 \cdot 1$;
\item A marked envelope containing the encoding of $6=2^1\cdot{}3^{1}\cdot{}5^0\cdot{}7^0$:
  \begin{itemize}
 \item A $2$-envelope that contains $e_2=3$ cards including $1$ black
   card and $2$ red cards.
 \item A $3$-envelope that contains $e_3=2$ cards including $1$ black
   card and $1$ red card.
    \item A $5$-envelope that contains $e_5=1$ cards including $0$
      black card and $1$ red card.
    \item A $7$-envelope that contains $e_7=1$ cards including $0$
      black card and $1$ red card.
  \end{itemize}
\end{itemize}
This large envelope is then shuffled by the shuffle functionality with
the $3$ other large envelopes prepared in the setup phase.  Then the
verifier  checks that those $4$ envelopes encode one possible cage
solution for each maximum number, namely, $8$, $6$, $4$ and $2$.  That
is, in each of those four large envelopes, for each prime $p$, the
differences in terms of black cards between both $p$-envelopes
always gives the prime factor decomposition of the target, $2$.

\subsection{Security Proofs for KenKen}\label{app:kenken}

Now we prove the security of our algorithm.
 
\begin{lemma}[KenKen Completeness]
If $P$ knows a solution of a given KenKen grid, then he is able to convince $V$.
\end{lemma}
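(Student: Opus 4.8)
The plan is to follow the same pattern as the completeness proofs for Akari, Takuzu and Kakuro. I would start from the assumption that $P$ knows a solution $S$ and has run the setup honestly, so that every cell of the grid carries three \emph{identical} large envelopes whose black/red encoding represents (as in Kakuro) the value of that cell in $S$, and whose family of $p$-envelopes simultaneously represents the prime factorization of the same value. Because the three envelopes on a cell are identical, the verifier's random choice of one envelope per cell is irrelevant against an honest prover, so it suffices to go through the five families of checks one by one and argue that each accepts.

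First, the row/column unicity check: since $S$ places each of $1,\dots,n$ exactly once in every row and column, the $n$ envelopes the verifier opens for a given line show $n$ distinct values, hence every number appears exactly once; and the coincidence of the two encodings inside each envelope holds by construction of the setup, so the exponents recorded in the $p$-envelopes match the factorization of the number of black cards. For an addition cage, exactly as in Kakuro, the total number of black cards collected over the cells of the cage is $\sum c_i = t$, so the shuffled pile contains $t$ black cards. For a multiplication cage of target $t$, for each prime $p\le n$ the exponents stored in the $p$-envelopes of the cells sum to the exponent of $p$ in the factorization of the product of the cell values, which equals $t$; hence the pile obtained by emptying those $p$-envelopes has exactly that many black cards, and the check passes for every $p$.

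The interesting cases are the subtraction and division cages. In the interactive round the prover, knowing $S$, can look inside each handed envelope and correctly identify the one holding the maximal value $\mathit{max}$ of the cage (breaking ties arbitrarily if several cells attain $\mathit{max}$), and mark exactly that one. After the verifier empties the remaining $c-1$ envelopes into the large envelope and adds the sealed marked envelope, this large envelope encodes precisely the combination attached to the maximal element $\mathit{max}$: for subtraction, (black cards in the marked envelope) $-$ (other black cards) $= \mathit{max} - \sum_{i=1}^{c-1} c_i = t$; for division, for each prime $p$ the difference of black cards between the two $p$-envelopes equals the exponent of $p$ in $t$. By construction, $P$ pre-placed next to the grid exactly the large envelopes for all the \emph{other} admissible maximal elements (one per value, the one coming from $S$ being the deliberately omitted one), so after the shuffle functionality mixes the freshly built large envelope with those prepared envelopes, the verifier sees the full set of admissible combinations, each once, each satisfying the target relation. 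Thus every check accepts; the worked instance in Section~\ref{app:ex:kenken} is precisely this argument carried out on a concrete division cage.

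The main obstacle I expect is purely combinatorial bookkeeping: one must make the envelope count precise, i.e.\ verify that the combination reconstructed from $S$ is exactly the one omitted in the setup, that no admissible maximal element is counted twice, and that the "$n(c-1)$ free cards'' (subtraction) and "$3\cdot e_p$ cards'' (division) tallies of the prepared large envelopes are consistent with the target equation once the cage envelopes are merged. Once these tallies are checked, completeness holds for every possible challenge, and since the protocol is repeated $\seck$ times, an honest $P$ convinces $V$ with probability $1$.
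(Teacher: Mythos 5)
Your proposal is correct and follows essentially the same route as the paper's (much terser) proof: unicity and addition cages reduce to the Kakuro completeness argument, subtraction rests on the fact that the remaining cage entries sum to the maximal element minus $t$, multiplication on the per-prime exponent sums in the $p$-envelopes, and division combines the marked-maximum round with the multiplication check. The envelope-counting bookkeeping you flag at the end is precisely what the paper also leaves implicit, so there is no gap relative to the published argument.
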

  \begin{proof}
Unicity in rows/columns as well as correctness of addition cages
follows from the completeness of the Kakuro
protocol in Lemma~\ref{lem:kakuro:complete}. 
Correctness for the subtraction comes from the fact that if $k$ is a
maximal element in a cage, then the sum of all the remaining elements
in the cage is equal to $k-t$ (note that this implies
that $k$ must be larger than $t$). 
Correctness for the multiplication of target $t$ is guaranteed for
each prime $p$:
when mixed, the number of black cards in all the $p$-envelopes
is exactly the exponent of $p$ in the factorization of the target $t$.
Similarly, for the division, for each possible maximal element $k$, 
the verifier checks by multiplications that the set of factors always
yield $k/t$.
\end{proof}
In order for the protocol to be acceptable its verification phase
should at least remain polynomial with the size of the grid. We show
that this is indeed the case in Lemma~\ref{lem:kenken:complexity}.
\begin{lemma}[KenKen Complexity]\label{lem:kenken:complexity}
The number of operations to verify a KenKen grid is polynomial in the
size $n$ of the grid.
\end{lemma}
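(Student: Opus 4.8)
The plan is to bound the number of verification operations cage by cage and row/column by row/column, showing each contributes at most a polynomial number of operations in $n$, and then sum over a polynomial number of such structures. First I would recall that the grid has $n$ rows and $n$ columns, and the number of cages is at most $n^2$ (each cell belongs to one cage), with each cage having at most $n^2$ cells but more usefully at most $n$ cells per row/column constraint — in any case the total number of cells across all cages is exactly $n^2$. So the only thing that could blow up is the per-cage work for multiplication, subtraction and division cages, where we manipulate $p$-envelopes and large envelopes encoding prime factorizations.

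The key steps, in order: (1) Unicity check: for each of the $n$ rows and $n$ columns, the verifier picks one envelope per cell ($n$ envelopes), shuffles, opens them and checks each number appears once and that the black-card count matches the $p$-envelope exponents; since each integer $\leq n$ has $\mathcal{O}(n/\log n)$ prime factors each with exponent $\mathcal{O}(\log n)$, each envelope holds $\mathcal{O}(n)$ cards total, so this is $\mathcal{O}(n^2)$ work per line and $\mathcal{O}(n^3)$ overall. (2) Addition cages: as in Kakuro, linear in the number of cells times $\mathcal{O}(n)$ cards, so $\mathcal{O}(n^2)$ per cage and $\mathcal{O}(n^3)$ total. (3) Multiplication cages of target $t$: here I invoke the remark at the end of the ZKP construction paragraph — the prime factors of $t$ are all $\leq n$ (since the factors are cell values $\leq n$), so factoring $t$ reduces to computing $\gcd(t,k)$ for $k=2,\dots,n$, which is polynomial; the verifier runs one exponent-addition protocol per prime $p \leq n$, i.e. $\mathcal{O}(n/\log n)$ protocols, each on $\mathcal{O}(nc)$ cards where $c$ is the cage size, giving polynomial work. (4) Subtraction and division cages: the extra interactive round to mark the maximal element costs one pass over the $\leq n$ cells; then the marked large envelope is shuffled with at most $n-t \leq n$ other large envelopes, and each large envelope contains $\mathcal{O}(n)$ cards (for subtraction) or a full set of $p$-envelopes with $\mathcal{O}(n)$ cards total (for division) — so $\mathcal{O}(n^2)$ per cage. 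Summing: at most $n^2$ cages, each costing $\mathrm{poly}(n)$, plus $2n$ lines each costing $\mathrm{poly}(n)$, and the whole protocol is repeated $\seck$ times, yielding a total of $\seck \cdot \mathrm{poly}(n)$ operations, which is polynomial in $n$ for fixed $\seck$ (and indeed polynomial in $n$ and $\seck$ jointly).

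I expect the main obstacle — really the only subtle point — to be justifying that factoring the target numbers and bounding the number of primes involved stays polynomial. This is exactly where the paper's construction relies on the fact that every cell value is at most $n$, so every prime dividing any target (for a multiplication cage) or any maximal element (for a division cage) is at most $n$; hence factorization is just testing divisibility by $2,\dots,n$ (or computing the relevant gcd's), and by standard prime-counting estimates (cited as~\cite{Rosser:1962:formulas}) there are only $\mathcal{O}(n/\log n)$ such primes, each with exponent $\mathcal{O}(\log n)$ since $p^e \leq n$. Everything else is a routine accounting of polynomially many shuffles and card openings, each touching $\mathcal{O}(n)$ cards, over polynomially many rows, columns and cages.
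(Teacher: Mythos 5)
Your proposal is correct and takes essentially the same route as the paper's proof: count the $\mathcal{O}(n/\log n)$ primes below $n$ via the same Rosser--Schoenfeld estimate, note that factoring targets reduces to trial division (gcd's) with $2,\dots,n$ since all cell values are at most $n$, and sum polynomial per-line and per-cage costs over $2n$ lines and at most $n^2$ cages. One small inaccuracy that does not affect the conclusion: the exponent of a prime in a multiplication-cage \emph{target} need not be $\mathcal{O}(\log n)$ (only exponents in individual cell values are); the paper handles this by bounding targets by $(n!)^n$, hence exponents by $n^2\log_2 n$, which is still polynomial and consistent with your card-count in step (3).
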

\begin{proof}
It is easy to see that the addition and subtraction protocols are linearly
checked. 
For the unicity checks, as well as for the multiplication and division
cages, we have to consider the number of
distinct possible prime factors. By classical number theory (see for
instance, \cite[(2.18)]{Rosser:1962:formulas}) the number of prime
factors below $n$ is ${\mathcal O}(n/\log(n))$. 
So checking a multiplication cage can be done with that number of prime
exponent checks, just like checking the correspondence of the
encodings in each cell, and checking a division is performed with at most $n$
multiplication checks (one for each possible maximal element in a
cage).
Then each target $t$ is at most $(n!)^n$, so each exponent is at most
$\log_2(t)\leq{}n\log_2(n!)\leq{}n^2\log_2(n)$. Finally the number of
cages in a grid is at most $n^2$, so a very rough bound on the number
of operations for the verifier is $n^{5+o(1)}$.
\end{proof}

\begin{lemma}[KenKen Soundness]
If $P$ does not provide a solution of a given KenKen grid,
then he is not
able to convince $V$ except with a negligible probability $p=(1/3)^\seck$
when the protocol is repeated $\seck$ times.
\end{lemma}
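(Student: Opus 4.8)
The plan is to mirror the soundness arguments used for Kakuro (Lemma~\ref{lem:kakuro:sound}) and Akari, namely to show that if $P$ is able to answer \emph{every} possible challenge in a single round, then the committed envelopes encode a genuine solution of the KenKen grid; by contraposition, a cheating prover fails at least one challenge, and since the verifier's choice is uniform over a constant number of essentially independent ``rule families'' the detection probability in one round is at least $1/3$. First I would fix notation: for each cell $P$ has committed three envelopes, and the cheater's only freedom is (a) to place non-identical envelopes on a cell, (b) to make the addition/subtraction encoding disagree with the multiplication/division encoding inside one envelope, or (c) to have the genuine values simply violate a row/column/cage constraint. As in the Kakuro proof, placing three identical and consistent envelopes per cell that do not solve the grid is caught with certainty, so I only need to bound the cheating gain from (a) and (b).

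Next I would go rule family by rule family, exactly as the Verification phase is structured. For the unicity-in-rows/columns check and the encoding-consistency check, the analysis is the Sudoku/Kakuro one: the shuffle functionality hides which envelope came from which cell, so a row with a repeated value (or a cell whose black-card count disagrees with its $p$-envelope exponents) is exposed whenever the verifier happens to select that line, and across the round the probability of landing on a bad line is bounded below. For addition cages the argument is verbatim the Kakuro sum argument. For multiplication cages I would observe that the exponent-addition sub-protocol is just the Kakuro sum protocol run once per prime $p \le n$, so a false product is caught in at least one of those parallel checks. The new ingredient is the subtraction and division cages: here I would argue that the extra interactive round forces $P$ to mark an envelope, after which the marked envelope together with the $n-t$ (resp.\ $u-1$) precommitted large envelopes must realize all $n-t+1$ (resp.\ $u$) distinct-maximum combinations exactly once; since these combinations are pinned down before the verifier shuffles, a prover whose cage values are wrong cannot make the multiset of revealed large envelopes match the required set, and the target check on each large envelope then fails. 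Combining the families, if the commitment is well formed and consistent but not a solution, some family's check fires; if it is ill formed, case~(a)/(b) fires. The probability bound $1/3$ then comes from the fact that the verifier's top-level choice is among a constant number of rule families, with the worst per-round detection probability being the $1/4$-type bound from the Kakuro cheating analysis sharpened/absorbed into $1/3$ once one accounts for the uniform split among families; raising to the power $\seck$ over independent repetitions gives $p = (1/3)^\seck$.

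The main obstacle I expect is the subtraction/division argument, specifically making rigorous that the interactive marking round plus the ``mix with all other maximal-element solutions'' device does not hand the prover extra cheating room: one must check that for \emph{every} choice of which envelope the prover marks (including when the prover lies about which cell holds the maximum, or when there are spurious duplicate maxima created by ill-formed cells), the resulting collection of large envelopes either fails a target check or fails the ``each of the $u$ combinations present exactly once'' check. This is the step where the soundness error could in principle be larger than $1/3$, so the bookkeeping of how the $1/4$ single-cell cheating bound from Lemma~\ref{lem:kakuro:sound} interacts with the number of rule families — and whether one should instead state the constant more conservatively — is the delicate part; everything else reduces cleanly to the already-established Kakuro and Sudoku-style sub-arguments.
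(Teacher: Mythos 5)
Your high-level plan (per-rule soundness plus an analysis of the only remaining cheat, namely non-identical envelopes on a cell) matches the paper, but the quantitative core is wrong. In the actual protocol there is no top-level random choice among ``rule families'': in each round the verifier checks \emph{all} rules, drawing one envelope per cell for each rule touching that cell. The constant $1/3$ comes precisely from the fact that each KenKen cell carries \emph{three} envelopes because it participates in exactly three checks (its row unicity, its column unicity, and its cage); if a cheating prover places envelopes on a cell that are not all identical, then at least one value is distinct from the other two, and the random assignment of the three envelopes to the three rules sends that deviant envelope to the one rule that needs it with probability at most $1/3$. Your derivation of $1/3$ from ``the uniform split among families'' with the Kakuro $1/4$ ``sharpened/absorbed'' has no counterpart in the protocol, and your bookkeeping is internally inconsistent: you assert a per-round \emph{detection} probability of at least $1/3$ and then conclude a fooling probability of $(1/3)^\seck$, whereas the correct statement is that the per-round \emph{success} probability of the cheat is at most $1/3$ (the same structure as in Lemma~\ref{lem:kakuro:sound}, with $3$ in place of $4$).

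The second gap is the step you yourself flag as an obstacle: the soundness of the subtraction and division checks despite the interactive marking round. The paper resolves this with a short, decisive observation you are missing: if the prover marks an envelope that does not contain a maximal element of the cage, then the committed difference (black cards in the marked envelope minus the remaining black cards) is negative, hence cannot equal the positive target, so the check fails with certainty; the analogous argument handles division. Consequently every rule check --- unicity, addition, multiplication, subtraction, division --- is \emph{perfectly} sound when run on well-formed (identical) envelopes, and no extra soundness loss comes from the marking round or from the mixing with the precommitted large envelopes. Without this observation your argument leaves open exactly the possibility you worry about (a soundness error larger than $1/3$ hidden in the subtraction/division cages), so the proof as proposed does not establish the stated bound $p=(1/3)^\seck$.
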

\begin{proof}
As for Kakuro, separately checking unicity rules, addition or
multiplication is perfectly sound.
For subtraction, the prover could mark an element of the
cage which is not maximal. 
But then the subtraction would yield a negative result,
necessarily different from the target. Therefore checking the
subtraction alone is also perfectly sound. A similar argument works
for the division.
Therefore, in a similar way as for Kakuro 
(Lemma~\ref{lem:kakuro:sound}) 
the prover is always caught by the protocol as a liar if he places the
large envelopes such that each cell has three identical envelopes.
Thus again,
the only way a cheating prover can cheat is by placing on a cell, say
cell $a$, three envelopes that do not all contains the same value. 
Then at least
one value is distinct form the other two, and the probability to
randomly pick it for the rule needing it is lower than $1/3$.
As the protocol is repeated $\seck$ times, the probability that $P$ convinces $V$ without the solution is bounded by $(1/3)^\seck$, which is negligible.
\end{proof}

\begin{lemma}[KenKen Zero-Knowledge]\label{lem:kenken:zk}
$V$ learns nothing about~$P$'s solution of a given KenKen grid.
\end{lemma}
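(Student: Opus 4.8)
The plan is to follow the simulator‑based technique already used for Akari (Lemma~\ref{lem:akari:zk}) and Kakuro (Lemma~\ref{lem:kakuro:zk}): construct an efficient simulator that, interacting with an arbitrary (possibly cheating) verifier, produces a transcript indistinguishable from a real execution, while only having access to the public grid and not to any solution. As in those proofs, the simulator is granted the extra power that, whenever the shuffle functionality is invoked, it may replace the shuffled packet by a packet of its own choosing. I would first handle the components that are identical to Kakuro — the unicity‑in‑rows/columns check and the addition cages — by citing the Kakuro zero‑knowledge argument essentially verbatim: the simulator puts arbitrary envelopes on the cells, and when a unicity packet is shuffled it swaps in a randomly shuffled packet in which each value $1,\dots,n$ appears exactly once (and in which the $p$‑envelope encodings are internally consistent with the number of black cards), and when an addition cage's cards are shuffled it swaps in a shuffled set of cards whose number of black cards equals the target.

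**Multiplication and division cages.** Next I would treat the multiplication cages: here the verifier opens one envelope per cell, discards the free black/red cards, and for each prime $p \le n$ in turn pools the $p$‑envelopes and shuffles them; the simulator swaps in, for each prime $p$, a shuffled packet whose number of black cards is exactly the exponent of $p$ in the factorization of the target $t$. For subtraction cages there is the extra interactive marking round to deal with. The simulator will, acting as the prover, mark one of the shuffled envelopes (chosen at random) when the verifier passes them over; then it simulates the final shuffle of the assembled large envelope with the $n-t$ precomputed large envelopes by swapping in $n-t+1$ large envelopes, one for each possible maximal element, each internally consistent with the target, in random order. Division cages combine the two mechanisms exactly as in the construction, so the simulator combines the two strategies: mark a random envelope in the interactive round, then swap in $u$ large envelopes (one per possible maximal element), each containing $p$‑envelope encodings whose black‑card differences reproduce the factorization of the target.

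**Indistinguishability and the main obstacle.** The key point, as in the Akari and Kakuro arguments, is that in every branch all relevant cards and envelopes are face down (and the $p$‑envelopes are sealed or discarded unopened) at the moment the shuffle functionality acts, so the verifier's view before that point is independent of the solution, and after the swap the distribution the verifier observes is exactly the uniform distribution over packets satisfying the relevant public constraint — which is also what an honest prover's packets induce, given the shuffle. I expect the main obstacle to be the subtraction and division cages: one must argue carefully that revealing the \emph{marked} envelope (the maximal element) leaks nothing, and the reason is precisely that the construction mixes the solution's cage with the precomputed large envelopes for \emph{all} other possible maximal values, so the set of opened large envelopes the verifier sees is always the full canonical set $\{$ one solution per possible maximum $\}$, regardless of which maximum the real solution uses; hence the simulator can produce this same set without knowing the solution. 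I would close by noting that the simulator is polynomial — it builds only a polynomial number of envelopes and cards by Lemma~\ref{lem:kenken:complexity} — and concluding that real and simulated transcripts are identically distributed, so the protocol is zero‑knowledge.
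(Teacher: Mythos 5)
Your proposal is correct and follows essentially the same route as the paper's proof: reuse the Kakuro simulator for unicity and addition, and for multiplication, subtraction and division exploit the fact that the verifier only inspects cards after a final shuffle, at which point the simulator swaps in packets (resp.\ the full set of large envelopes, one per possible maximal element) satisfying the public target. The paper states this much more tersely; your elaboration of the marking round and of why revealing the canonical set of large envelopes leaks nothing is a faithful expansion of the same argument.
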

\begin{proof}
The same kind of simulator as for Lemma~\ref{lem:kakuro:zk} can be
used: for unicity and addition rules the simulator is directly that of
Kakuro. 
Similarly, for multiplication, subtraction and division cages, the
verifiers look at the colors of the cards only during the final step
of the verification phase, and only after a last shuffle by the prover.
Therefore, in this penultimate step, the simulator can use its ability
to replace shuffles by his choice of envelopes that satisfy the
expected rule. Once again, this indistinguishable from those provided
by an honest prover.
\end{proof}

\section{Conclusion}\label{sec:conclusion}

In this paper, we devised a solution that allows \Charlie to solve his
friend's dilemma for both games. \Alice can now prove to \Tom that she
knows a solution to his Akari problem without revealing his solution,
and \Tom can prove to \Alice that he knows a solution to his Takuzu
problem without revealing it either.
The same is true for \Ken and \Kakarotto playing Kakuro and KenKen.

We showed that our solutions are
secure, that is complete, sound and zero-knowledge. Moreover, we
do not use any cryptographic primitives, but only cards, paper and envelopes.

As future work, we would like to investigate other similar games.
For example, we would like to analyse Futoshiki, which can be seen has a
variation of Sudoku with additional constraints on the order of the
numbers, or Hitori, which has the constraint that all unmarked cells need to be connected, unlike any constraint in the games analysed in this paper.

\bibliography{biblio-short}

\end{document}